\newcommand{\singleDoubleBlind}[2]{\ifthenelse{\boolean{isDoubleBlind}}{#2}{#1}}
\newcommand{\cf}[0]{\emph{cf.}\xspace}
\newcommand{\eg}[0]{\emph{e.g.},\xspace}
\newcommand{\ie}[0]{\emph{i.e.},\xspace}
\newcommand{\st}[0]{\ensuremath{\mbox{s.t.}}}
\newcommand{\etal}[0]{\emph{et al.}\xspace}
\newcommand{\term}[1]{\emph{#1}}
\newcommand{\nth}[2]{\ensuremath{{#1}^{\mbox{\scriptsize #2}}}}
\def\ignore#1{}
\newcommand{\AlgMatchMM}[0]{\textsc{ManyMany}}
\newtheorem{theorem}{Theorem}[section]
\newtheorem{definition}[theorem]{Definition}
\newtheorem{example}[theorem]{Example}
\begin{document}
%
\title{Principled Graph Matching Algorithms for \\ Integrating Multiple Data Sources\thanks{Research performed by authors while at Microsoft Research.}}
%
%
%
%


\author{Duo~Zhang,
        Benjamin~I.~P.~Rubinstein,
        and~Jim~Gemmell
\IEEEcompsocitemizethanks{\IEEEcompsocthanksitem D. Zhang is with Twitter, Inc., USA.
\IEEEcompsocthanksitem B. Rubinstein is with the University of Melbourne, Australia.
\IEEEcompsocthanksitem J. Gemmell is with Tr\={o}v, USA.\protect\\
E-mail: dzhang@twitter.com, ben@bipr.net, jim.gemmell@gmail.com}
\thanks{}}

%
%

\markboth{IEEE Transactions on Knowledge and Data Engineering, Vol. 26, No. X, X 2014}%
{Zhang et al.: Principled Graph Matching Algorithms for Integrating Multiple Data Sources}
%


\IEEEcompsoctitleabstractindextext{%
\begin{abstract}
This paper explores combinatorial optimization for problems of max-weight graph matching
on multi-partite graphs, which arise in integrating multiple data sources.
Entity resolution---the data integration problem of performing noisy joins on structured
data---typically proceeds by first hashing each record into zero or more blocks, scoring pairs
of records that are co-blocked for similarity, and then matching pairs of sufficient similarity.
In the most common case of matching two sources, it is often desirable for the final
matching to be one-to-one (a record may be matched with at most one other); members of the
database and statistical record linkage communities accomplish such matchings in the final stage
by weighted bipartite graph matching on similarity scores. Such matchings are intuitively appealing:
they leverage a natural global property of many real-world entity stores---that of being nearly deduped---and
are known to provide significant improvements to precision and recall. Unfortunately unlike the
bipartite case, exact max-weight matching on multi-partite graphs is known to be NP-hard. Our two-fold
algorithmic contributions approximate multi-partite max-weight matching:
our first algorithm borrows optimization techniques common to Bayesian probabilistic inference;
our second is a greedy approximation algorithm. In addition to a theoretical guarantee on the latter,
we present comparisons on a real-world ER problem from Bing significantly larger than
typically found in the literature, publication data, and on a series of synthetic problems.
Our results quantify significant improvements due to exploiting multiple
sources, which are made possible by global one-to-one constraints linking otherwise independent
matching sub-problems. We also discover that our algorithms are complementary: one being much more robust
under noise, and the other being simple to implement and very fast to run.
\end{abstract}


\begin{keywords}
Data integration, weighted graph matching, message-passing algorithms
\end{keywords}}

\maketitle

\section{Introduction}\label{sec:intro}

It has long been recognized---and explicitly discussed recently~\cite{1to1}---that
many real-world entity stores are naturally free of duplicates.
Were they to have replicate entities, crowd-sourced sites such as Wikipedia would have edits applied to one copy and not others.
Sites that rely on ratings such as Netflix and Yelp would suffer diluted recommendation value by having ratings split over multiple instantiations of the same product or
business page.
And customers of online retailers such as Amazon would miss out on
lower prices, options on new/used condition, or shipping arrangements
offered by sellers surfaced on duplicate pages.
Many publishers have natural incentives that drive them to deduplicate, or maintain uniqueness,
in their databases.

Dating back to the late 80s in the statistical record linkage community and more recently
in database research~\cite{WebLists,Jaro89,ConstraintBased2005,Challenges,1to1,CMU_record}, a
number of entity resolution (ER) systems have successfully employed one-to-one graph matching
for leveraging this natural lack of duplicates. Initially the benefit
of this approach was taken for granted, but in preliminary recent
work~\cite{1to1} significant improvements to precision and recall due to this approach have
been quantified.
The reasons are intuitively clear: data noise or deficiencies of the scoring function can lead to poor scores, which can negatively affect ER accuracy; however graph matching corresponds to imposing a global, one-to-one constraint which effectively smoothes local noise.
This kind of bipartite one-to-one matching for combining a pair of data sources is both well-known to many in the community, and widely
applicable through data integration problems as it can be used to augment numerous existing systems---\eg~\cite{InfoSys-MultClassSys,JCoopInfoSys,QDBMUD08-STEM,Benjelloun09,Kopcke2010,Matching_product,JASA69,Winkler94,PAKDD08-FEBRL,Winkler06,ActiveLearn-KDD02,KDD03-MARLIN,Culotta2005,DedupSurvey07}---as
a generic stage following data preparation, blocking, and scoring.

\begin{figure*}[t]
\begin{center}
\begin{minipage}[t]{0.31\textwidth}
\centering
\includegraphics[width=1\linewidth]{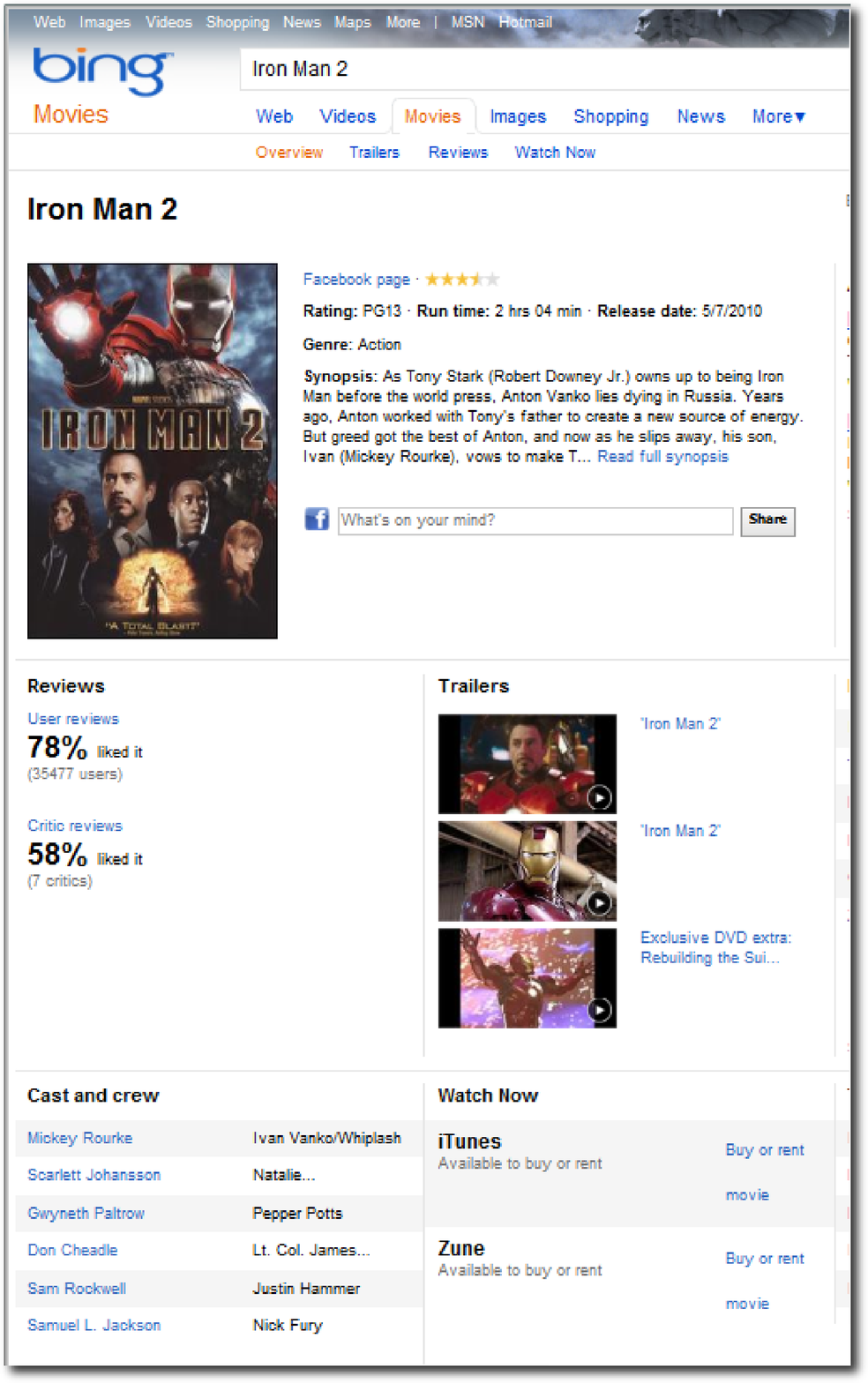}
\end{minipage}\hfill
\begin{minipage}[t]{0.31\textwidth}
\centering
\includegraphics[width=1\linewidth]{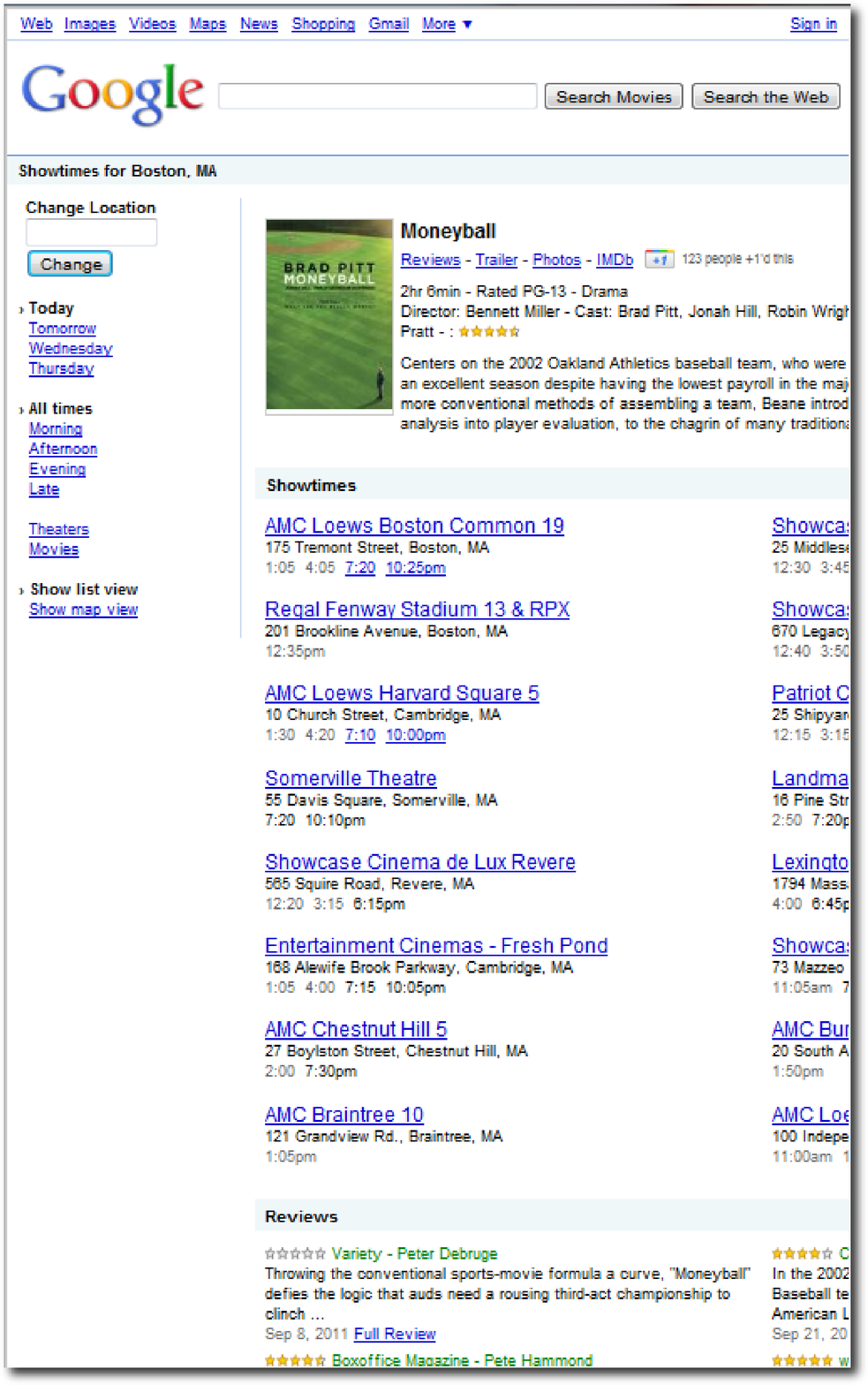}
\end{minipage}\hfill
\begin{minipage}[t]{0.31\textwidth}
\centering
\includegraphics[width=1\linewidth]{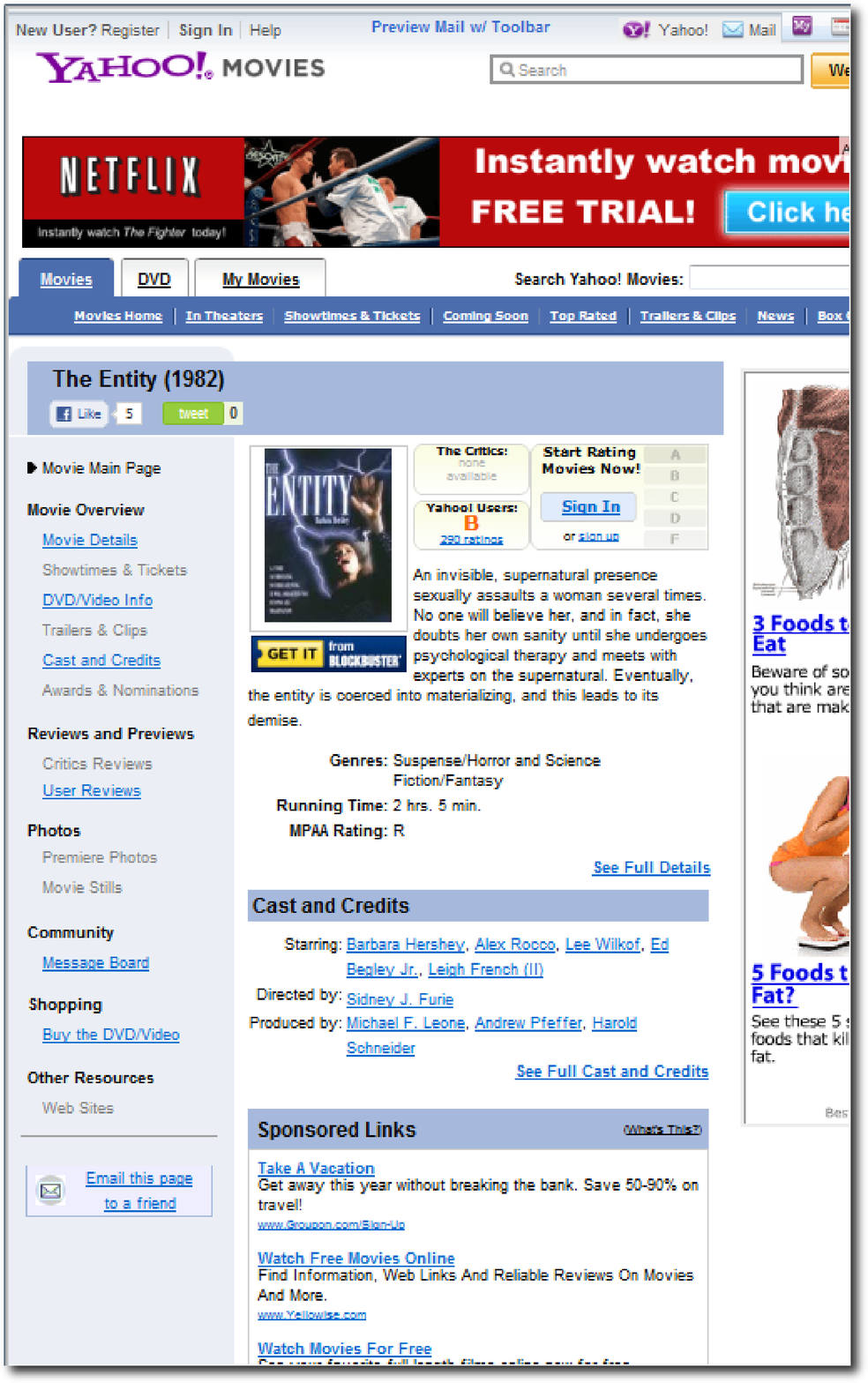}
\end{minipage}
\caption{Multi-source ER is central to (left) Bing's (middle) Google's and (right) Yahoo!'s movies verticals for implementing entity actions such as ``buy'', ``rent'', ``watch'', ``buy ticket'' and surfacing meta-data, showtimes and reviews. In each case data on the same movie is integrated from multiple noisy sources.}
\end{center}
\label{fig:bing}
\end{figure*}

Another common principle to data integration is of improved user utility by fusing multiple sources of data.
But while many papers have circled around the problem, very little is known about extending
one-to-one graph matching to practical multi-partite matching for ER. In the theory community
exact max-weight matching is known to be NP-hard~\cite{Crama1992273}, and in the statistics community expectation
maximization has recently been applied to approximate the solution successfully for ER, but inefficiently with exponential computational requirements~\cite{CMU_record}.

In this paper we propose principled approaches for approximating multi-partite weighted graph matching. Our first approach is based on message-passing
algorithms typically used for inference on probabilistic graphical models but used here for
combinatorial optimization. Through a series of non-trivial approximations we derive an approach
more efficient than the leading statistical record linkage work of Sadinle~\etal~\cite{CMU_record}. Our second approach extends the
well-known greedy approximation to bipartite max-weight matching to the multi-partite case. While
less sophisticated than our message-passing algorithm, the greedy approach enjoys an
easily-implementable design and a worst-case 2-approximation competitive ratio (\cf Theorem~\ref{thm:np-cr}).

The ability to leverage one-to-one constraints when performing multi-source data integration is
of great economic value: systems studied here have made significant impact in production use for example within several Bing verticals
and the Xbox TV service driving critical customer-facing features (\cf \eg Figure~\ref{fig:bing}).
We demonstrate on data taken from these real-world services that our approaches enjoy
significantly improved precision/recall over the state-of-the-art unconstrained 
approach and that the addition of sources yields further improvements due to global constraints.
This experimental study is of atypical value owing to its unusually large scale:
compared to the largest of the four datasets used in the recent well-cited evaluation study
of K\"opcke~\etal~\cite{KopckeTR10}, our problem is three orders of magnitude larger.\footnote{While their largest problem contains $1.7\times 10^8$ pairs, our's measures in at $1.6\times 10^{11}$ between just two of our sources. For data sources measuring in the low thousands, as in their other benchmark problems and as is typical in many papers, purely crowd-sourcing ER systems such as~\cite{HumanJoins} could be used for mere tens of dollars.}
We conduct a second experimental comparison on a smaller publication dataset to
demonstrate generality.
Finally we explore the robustness of our approaches to varying degrees of edge weight noise via synthetic
data.

In summary, our main contributions are:
\begin{enumerate}

\item A principled factor-graph message-passing algorithm for generic, globally-constrained multi-source data integration;

\item An efficient greedy approach that comes with a sharp, worst-case guarantee (\cf Theorem~\ref{thm:np-cr});

\item A counter example to the common misconception that sequential bipartite matching is a sufficient approach to joint multipartite matching (\cf Example~\ref{eg:sequential});

\item Experimental comparisons on a very large real-world dataset demonstrating that generic one-to-one matching leverages naturally-deduplicated data to significantly improve precision/recall;

\item Validation that our new approaches can appropriately leverage information from multiple sources to improve ER precision and recall---further supported by a second smaller real-world experiment on publication data; and

\item A synthetic-data comparison under which our message-passing approach
enjoys superior robustness to noise, over the faster greedy approach.

\end{enumerate}

\ignore{The remainder of this paper is organized as follows. Section~\ref{sec:related} discusses our paper in the context of related work. We formalize the multi-source entity matching problem in Section~\ref{sec:problem} and present our algorithms in Section~\ref{sec:algorithms}. We describe our experimental setup in Section~\ref{sec:experimental} and present the results in Section~\ref{sec:results}. Section~\ref{sec:conc} concludes the paper.}

\section{Related Work}\label{sec:related}

Numerous past work has studied entity resolution~\cite{InfoSys-MultClassSys,JCoopInfoSys,QDBMUD08-STEM,Benjelloun09,Kopcke2010,Matching_product}, statistical record linkage~\cite{JASA69,Winkler94,PAKDD08-FEBRL,Winkler06} and deduplication~\cite{ActiveLearn-KDD02,KDD03-MARLIN,Culotta2005,DedupSurvey07}.
There has been little previous work investigating one-to-one bipartite resolutions and almost no results on constrained multi-source resolution.

Some works have looked at constraints in general~\cite{Chaudhuri2007,ConstraintBased2005,Dedupalog} but they do not focus on the one-to-one constraint. Su \etal\ do rely on the rarity of duplicates in a given web source~\cite{NegativeExamples} but use it only to generate negative examples. Guo \etal\ studied a record linkage approach based on uniqueness constraints on entity attribute values. In many domains, however, this constraint does not always hold. For example, in the movie domain, each movie entity could have multiple actor names as its attribute. Jaro produced early work~\cite{Jaro89} on linking census data records using a Linear Program formulation that enforces a global one-to-one constraint, however no justification or evaluation of the formulation is offered, and the method is only defined for resolution over two sources.  Similarly for more recent work in databases, such as in conflating web tables without duplicates~\cite{WebLists} and the notion of exclusivity constraints~\cite{Challenges}. We are the first to test and quantify the use of such a global constraint on entities and apply it in a systematic way building on earlier bipartite work~\cite{1to1}.

A more recent study~\cite{CMU_record} examines the record linkage problem in the multiple sources setting. The authors use a probabilistic framework to estimate the probability of whether a pair of entities is a match. However, the complexity of their algorithm is $O(B_mn^m)$, where $m$ is the number of sources, $n$ is the number of instances, and $B_m$ is the \nth{m}{th} Bell number which is exponential in $m$. Such a prohibitive computational complexity prevents the principled approach from being practical. In our message-passing algorithm, the optimization approach has far better complexity, which works well on real entity resolution problems with millions of entities; at the same time our approach is also developed in a principled fashion. 
Our message-passing algorithm is the first principled, tractable approach to constrained multi-source ER.

In the bipartite setting, maximum-weight matching is now well understood with exact algorithms able to solve the problem in polynomial time~\cite{PapaBook}. When it comes to multi-partite maximum-weight matching, the problem becomes extremely hard. In particular Crama and Spieksma~\cite{Crama1992273} proved that a special case of the tripartite matching problem is NP-hard, implying that the general multi-partite max-weight matching problem is itself NP-hard. The algorithms presented in this paper are both approximations: one greedy approach that is fast to run, and one principled approach that empirically yields higher total weights and improved precision and recall in the presence of noise. Our greedy approach is endowed with a competitive ratio generalizing the well-known guarantee for the bipartite case.


Previously Bayesian machine learning methods have been applied to entity resolution~\cite{Winkler02,Ravikumar04,Culotta2005,bhattacharya:sdm06}, however our goal here is not to perform inference with these methods but rather optimization. Along these lines in recent years, message-passing algorithms have been used for the maximum-weight matching problem in bipartite graphs~\cite{Bipartite}. There the authors proved that the max-product algorithm converges to the desirable optimum point in finding the maximum-weight matching for a bipartite graph, even in the presence of loops. In our study, we have designed a message-passing algorithm targeting the maximum-weight matching problem in multi-partite graphs. In a specific case of our problem, \ie maximum-weight matching in a bipartite graph, our algorithm works as effectively as exact methods. Another recent work~\cite{General_graph} studied the weighted-matching problem in general graphs, whose problem definition differs from our own. It shows that max-product converges to the correct answer if the linear programming relaxation of the weighted-matching problem is tight. Compared to this work we pay special attention to the application to multi-source entity resolution, tune the general message-passing algorithms specifically for our model, and perform large-scale data integration experiments on real data.


\section{The MPEM Problem}\label{sec:problem}

We begin by formalizing the generic data integration problem on multiple sources. Let $D_1$, $D_2$, $\dots$, $D_m$ be $m$ \term{databases}, each of which contains representations of a finite number of \term{entities} along with a special null entity $\phi$. The database sizes need not be equal.

\begin{definition}
Given $D_1$, $D_2$, $\dots$, $D_m$, the \term{Multi-Partite Entity Resolution} problem is to identify an unknown target relation or \term{resolution} $R \subseteq D_1 \times D_2 \times \dots \times D_m$, given some information about $R$, such as pairwise scores between entities, examples of matching and non-matching pairs of entities, etc.
\end{definition}

For example, for databases $D_1=\{e_1,\phi\}$ and $D_2=\{e'_1, e'_2, \phi\}$, the possible mappings are $D_1\times D_2=\{e_1 \leftrightarrow e'_1, e_1 \leftrightarrow e'_2, e_1 \leftrightarrow \phi, \phi \leftrightarrow e'_1, \phi \leftrightarrow e'_2, \phi \leftrightarrow \phi\}$. Here, $\phi$ appearing in the \nth{i}{th} component of a mapping $r$ means that $r$ does not involve any entity from source $i$. A resolution $R$, which represents the true matchings among entities, is some specific subset of all possible mappings among entities in the $m$ databases.

A \term{global one-to-one constraint} on multi-partite entity resolution asserts the target $R$ is \term{pairwise one-to-one}: for each non-null entity $x_i \in D_i$, for all sources $j \neq i$ there exists at most one entity $x_j \in D_j$ such that together $x_i, x_j$ are involved in one tuple in $R$ .

\begin{definition}
A \term{Multi-Partite Entity Matching (MPEM)} problem is a multi-partite entity resolution problem with the global one-to-one constraint.
\end{definition}

We previously showed experimentally~\cite{1to1} that leveraging the one-to-one constraint when performing entity resolution across two sources yields
significantly improved precision and recall---a fact long well-known anecdotally in the databases and statistical record linkage communities.
For example, if we know \emph{The Lord of the Rings I} in IMDB matches with \emph{The Lord of the Rings I} in Netflix, then the one-to-one property precludes the possibility of this IMDB movie resolving with \emph{The Lord of the Rings II} in Netflix.

The present paper focuses on the MPEM problem and methods for exploiting the global constraint when
resolving across multiple sources. In particular, we are interested in global methods which perform entity resolution across all sources simultaneously. In the experimental section, we will show that matching multiple sources together achieves superior resolution as compared to matching two sources individually (equivalently pairwise matching of multiple sources).

\subsection{The Duplicate-Free Assumption}
\label{sec:deduped}

As discussed in Section~\ref{sec:intro}, many real-world data sources are naturally de-duplicated due to various socio-economic incentives. There we argued, by citing significant online data sources such as Wikipedia, Aamzon, Netflix, Yelp, that
\begin{itemize}
\item Crowd-sourced site duplicates will diverge with edits applied to one copy and not the other;
\item Sites relying on ratings suffer from duplicates as recommendations are made on too little data; and
\item Attributes will be split by duplicates, for example alternate prices of shipping options of products.
\end{itemize}

Indeed in our past work, we quantified that the data used in this paper, coming (raw and untouched) from major online movie
sources, are largely duplicate free, with estimated levels of 0.1\% of duplicates. Our publication dataset is similar in terms of duplicates.

Finally, consider matching and merging multiple sources. Rather than taking a union of all sources then
performing matching by deduplication on the union, by instead (1) deduplicating each source then (2) matching, 
we may focus on heterogeneous data characteristics in the sources individually. Recognizing the importance
of data variability aids matching accuracy when scaling to many sources~\cite{transfer}.

\subsection{Why MPEM Requires Global Methods}

One may naively attempt to use one of two natural approaches for the MPEM problem: (1) threshold the overall scores to produce many-to-many resolutions; or (2) 1-to-1 resolve the sources in a sequential manner by iteratively performing two-source entity resolution. The disadvantages of the first have been explored previously~\cite{1to1}. The order of sequential bipartite matching may adversely affect overall results: if we first resolve two poor quality sources, then the poor resolution may propagate to degrade subsequent resolutions. We could undo earlier matchings in possibly cascading fashion, but this would essentially reduce to global multipartite. 

The next example demonstrates the inferiority of sequential matching.

\begin{example}\label{eg:sequential}
Take weighted tri-partite graph with nodes $a_1, b_1, c_1$ (source $1$), $a_2, b_2, c_2$ (source $2$), $a_3, b_3, c_3$ (source $3$). The true matching is $\{\{a_1, a_2, a_3\}, \{b_1, b_2, b_3\}, \{c_1, c_2, c_3\}\}$.
Each pair of sources is completely connected, with weights:
0.6 for all pairs between $1$ and $2$ except 0.5 between $\{a_1, a_2\}$ and 1 between $\{c_1, c_2\}$; 
1 for records truly matching between $1$ or $2$, and $3$;
0.1 for records not truly matching between $1$ or $2$, and $3$.
When bipartite matching between $1$ and $2$ then $3$ (preserving transitivity
in the final matching), sequential achieves weight $6.4$ while global matching achieves $8.1$.
\end{example}

\textit{
Sequential matching can produce very inferior overall matchings
due to being locked into poor subsequent decisions \ie local optima. This is compounded by poor quality data. In contrast global matching looks ahead.} As a result we focus on global approaches.



\section{Algorithms}\label{sec:algorithms}

We now develop two strategies for solving MPEM problems. The first is to approximately maximize total weight of the matching by co-opting optimization algorithms popular in Bayesian inference. The second approach is to iteratively select the best matching for each entity locally while not violating the one-to-one constraint, yielding a simple greedy approach.

\subsection{Message-Passing Algorithm}

As discussed in Section~\ref{sec:related}, exact multi-partite maximum-weight matching is NP-hard. Therefore we develop an approximate yet principled maximum-weight strategy for MPEM problems.  Without loss of generality, we use tri-partite matching from time-to-time for illustrative purposes.

We make use of the max-sum algorithm that drives \emph{maximum a posteriori}
probabilistic inference on Bayesian graphical models~\cite{Koller+Friedman:09}. Bipartite factor graphs
model random variables and a joint likelihood on them, via variable nodes
and factor nodes respectively. Factor nodes correspond to a factorization
of the unnormalized joint likelihood, and are connected to variable nodes iff
the corresponding function depends on the corresponding variable. \emph{Max-sum}
maximizes the sum of these functions (\ie MAP inference in log space), and
decomposes via dynamic programming to an iterative message-passing
algorithm on the graph.

In this paper we formulate the
MPEM problem as a maximization of a sum of functions over variables, naturally
reducing to max-sum over a graph.

\subsubsection{Factor Graph Model for the MPEM Problem}
The MPEM problem can be represented by a graphical model, shown in part in Figure~\ref{fig:graphical_model} for tri-partite graph matching. In this model, $C_{ijk}$ represents a variable node, where $i,\, j,\, k$ are three entities from three different sources. We use $x_{ijk}$ to denote the variable associated with graph node $C_{ijk}$. This boolean-valued variable selects whether the entities are matched or not. If $x_{ijk}$ is equal to 1, it means all three entities are matched. Otherwise, it means at least two of these three entities are not matched. Since we will not always find a matching entity from every source, we use $\phi$ to indicate the case of no matching entity in a particular source. For example, if variable $x_{i,j,\phi}$ is equal to 1, then $i$ and $j$ are matched without a matching entity from the third source.

\begin{figure}[h]
\begin{centering}
\includegraphics[scale=0.3]{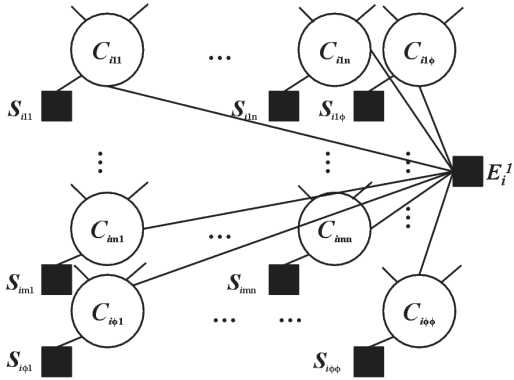}
\par\end{centering}
\caption{Graphical model for the MPEM problem.}
\label{fig:graphical_model}
\end{figure}

Figure~\ref{fig:node_view} shows all the factor nodes connected with a variable node $C_{ijk}$.
There are two types of factor nodes: (1) a similarity factor node $S_{ijk}$ which represents the total weight from matching $i,\, j,\, k$; (2) three constraint factor nodes $E_{i}^{1}$, $E_{j}^{2}$, and $E_{k}^{3}$ which represent one-to-one constraints on entity $i$ from source 1, entity $j$ from source 2, and entity $k$ from source 3, respectively. The similarity factor node only connects to a single variable node, but the constraint factor nodes connect with multiple variable nodes.

\begin{figure}[h]
\begin{centering}
\includegraphics[scale=0.3]{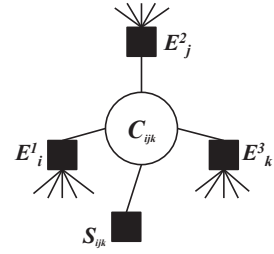}
\par\end{centering}
\caption{Variable node's neighboring factor nodes.}
\label{fig:node_view}
\end{figure}

The concrete definitions of the factor nodes are shown in Figure~\ref{fig:factor_nodes}. The definition of the similarity function is easy: if a variable node is selected, the function's value is the total weight between all pairs of entities within the variable node. Otherwise, the similarity function is 0. Each constraint $E$ is defined on a ``plane'' of variables which are related to one specific entity from a specific source. For example, as shown in Figure~\ref{fig:graphical_model}, the factor $E^1_i$ is defined on all the variables related to entity $i$ in source 1. It evaluates to 0 \emph{iff} the sum of the values of all the variables in the plane is equal to 1, which means exactly one variable can be selected as a matching from the plane. Otherwise, the function evaluates to negative infinity, penalizing the violation of the global one-to-one constraint. Thus these factor nodes serve as hard constraints enforcing global one-to-one matching. Note that there is no factor $E^1_\phi$ defined on the variables related to entity $\phi$ from a source, because we allow the entity $\phi$ to match with multiple different entities from other sources.

%

\begin{figure}[h]
\begin{centering}
\includegraphics[width=\columnwidth]{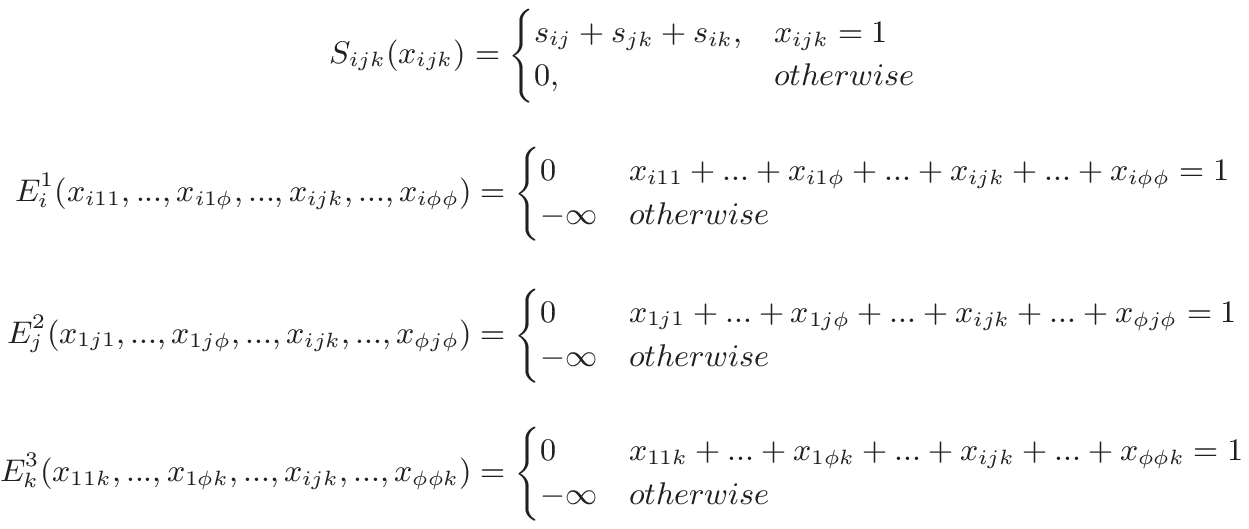}
\par\end{centering}
\caption{Definitions of factor nodes' potential functions displayed in Figure~\ref{fig:node_view}.}
\label{fig:factor_nodes}
\end{figure}

Assuming all functions defined in Figure~\ref{fig:factor_nodes} are in $\log$ space, then their sum corresponds to the objective function
\begin{equation}
f=\sum_{i,j,k}S_{ijk}(x_{ijk})+\sum_{s=1}^{3}\sum_{t=1}^{N_{s}}E_{t}^{s}\ ,
\end{equation}
where $N_s$ is the number of entities in source $s$. Maximizing this sum is equivalent to one-to-one max weight matching, since the constraint factors are \emph{barrier} functions. The optimizing assignment is the configuration of all variable nodes, either 0 or 1, which maximizes the objective function as desired.

%

\subsubsection{Optimization Approach}
The general max-sum algorithm iteratively passes messages in a factor graph by Eq.~\eqref{eq:max_sum} and Eq.~\eqref{eq:var_func_maxsum} below. In the equations, $n(C)$ represents neighbor factor nodes of a variable node $C$, and $n(f)$ represents neighbor variable nodes of a factor node $f$. $\mu_{f\rightarrow C}(x)$ defines messages passing from a factor node $f$ to a variable node $C$, while $\mu_{C\rightarrow f}(x)$ defines messages in the reverse direction. Each message is a function of the corresponding variable $x$. For example, if $x$ is a binary variable, there are in total two messages passing in one direction: $\mu_{f\rightarrow C}(0)$ and $\mu_{f\rightarrow C}(1)$.
\begin{eqnarray}
\small
\mu_{f\rightarrow C}(x) 
&=&\max_{x_{1},\cdots,x_{n}}\left(f(x,x_{1},...,x_{n})+ \vphantom{\sum_{C_i\in n(f)\backslash\{C\}}\mu_{C_i\rightarrow f}(x_i)}\right. \nonumber \\
&& \; \; \; \; \; \; \; \; \left.\sum_{C_i\in n(f)\backslash\{C\}}\mu_{C_i\rightarrow f}(x_i)\right) \label{eq:max_sum} \\
\mu_{C\rightarrow f}(x) &=& \sum_{h\in n(C)\backslash\{f\}}\mu_{h\rightarrow C}(x)\label{eq:var_func_maxsum}
\end{eqnarray}

As it stands the solution is computationally expensive. Suppose there are $m$ sources each with an average number of $n$ entities, then in each iteration there are in total $4mn^m$ messages to be updated: only messages between variable nodes and constraint factor nodes need to be updated. There are a total number of $n^m$ variable nodes and each has $m$ constraint factor nodes connecting with it. There are 4 messages between each  factor node and the variable node (2 messages in each direction); so there are in total $4m$ messages for each variable node.

\textbf{Affinity Propagation.} Instead of calculating this many messages, we employ the idea of Affinity Propagation~\cite{Givoni09}. The basic intuition is: instead of passing two messages in each direction, either $\mu_{f\rightarrow C}(x)$ or $\mu_{C\rightarrow f}(x)$, we only need to pass the difference between these two messages, \ie $\mu_{f\rightarrow C}(1) - \mu_{f\rightarrow C}(0)$ and $\mu_{C\rightarrow f}(1) - \mu_{C\rightarrow f}(0)$; at the end of the max-sum algorithm what we really need to know is which configuration of a variable $x$ (0 or 1) yields a larger result.

The concrete calculation---using tri-partite matching to illustrate---is as follows. Let
\begin{eqnarray*}
\beta_{ijk}^{1} &=& \mu_{C_{ijk}\rightarrow E_{i}^{1}}(1)-\mu_{C_{ijk}\rightarrow E_{i}^{1}}(0) \\
\alpha_{ijk}^{1} &=& \mu_{E_{i}^{1}\rightarrow C_{ijk}}(1)-\mu_{E_{i}^{1}\rightarrow C_{ijk}}(0)\enspace.
\end{eqnarray*}
Since
\begin{eqnarray*}
&& \mu_{C_{ijk}\rightarrow E_{i}^{1}}(0) \\
&=&\mu_{E_{j}^{2}\rightarrow C_{ijk}}(0)+\mu_{E_{k}^{3}\rightarrow C_{ijk}}(0)+\mu_{S_{ijk}\rightarrow C_{ijk}}(0) \\
&& \mu_{C_{ijk}\rightarrow E_{i}^{1}}(1) \\
&=& \mu_{E_{j}^{2}\rightarrow C_{ijk}}(1)+\mu_{E_{k}^{3}\rightarrow C_{ijk}}(1)+\mu_{S_{ijk}\rightarrow C_{ijk}}(1) 
\end{eqnarray*}
the difference $\beta_{ijk}^{1}$ between these two messages can be calculated by:
\begin{equation}
\beta_{ijk}^{1}=\alpha_{ijk}^{2}+\alpha_{ijk}^{3}+S_{ijk}(1)\enspace. \label{eq:update_rule_1}
\end{equation}

In the other direction, since
\begin{eqnarray}
\small
&& \mu_{E_{i}^{1}\rightarrow C_{ijk}}(1) \nonumber \\
&=&\max_{x_{i11},...,x_{i\phi\phi}\backslash\{x_{ijk}\}} \left\{ E_{i}^{1}(x_{ijk}=1, x_{i11},...,x_{i\phi\phi}) \vphantom{\sum_{a_{2}a_{3}\ne jk}\mu_{C_{ia_{2}a_{3}}\rightarrow E_{i}^{1}}(x_{ia_{2}a_{3}})} \right. \nonumber\\
 && \; \; +\; \left.\sum_{a_{2}a_{3}\ne jk}\mu_{C_{ia_{2}a_{3}}\rightarrow E_{i}^{1}}(x_{ia_{2}a_{3}})\right\},\nonumber
\end{eqnarray}
in order to get the max value for the message, all the variables except $x_{ijk}$ in $E_{i}^{1}$ should be zero because of the constraint function. Therefore
\begin{eqnarray*}
\mu_{E_{i}^{1}\rightarrow C_{ijk}}(1) &=& \sum_{a_{2}a_{3}\ne jk}\mu_{C_{ia_{2}a_{3}}\rightarrow E_{i}^{1}}(0)
\end{eqnarray*}

Similarly, since
\begin{eqnarray*}
\small
&& \mu_{E_{i}^{1}\rightarrow C_{ijk}}(0)  \\
&=&\max_{x_{i11},...,x_{i\phi\phi}\backslash\{x_{ijk}\}}\left\{ E_{i}^{1}(x_{ijk}=0, x_{i11},...,x_{i\phi\phi}) \vphantom{\sum_{a_{2}a_{3}\ne jk}\mu_{C_{ia_{2}a_{3}}\rightarrow E_{i}^{1}}(x_{ia_{2}a_{3}})} \right. \\
&&\; \; + \; \left. \sum_{a_{2}a_{3}\ne jk}\mu_{C_{ia_{2}a_{3}}\rightarrow E_{i}^{1}}(x_{ia_{2}a_{3}})\right\}\enspace ,
\end{eqnarray*}
to get the max message, one of the variables except $x_{ijk}$ in $E_{i}^{1}$ should be 1, \ie
\begin{eqnarray*}
\mu_{E_{i}^{1}\rightarrow C_{ijk}}(0)&=&\max_{b_{2}b_{3}\ne jk}\left\{ \mu_{C_{ib_{2}b_{3}}\rightarrow E_{i}^{1}}(1) \vphantom{\sum_{a_{2}a_{3}\ne jk,a_{2}a_{3} \ne b_{2}b_{3}}\mu_{C_{ia_{2}a_{3}}\rightarrow E_{i}^{1}}(0)}\right.  \\
&& +  \left.\sum_{a_{2}a_{3}\ne jk,a_{2}a_{3} \ne b_{2}b_{3}}\mu_{C_{ia_{2}a_{3}}\rightarrow E_{i}^{1}}(0)\right\} . 
\end{eqnarray*}
Therefore, subtracting $\mu_{E_{i}^{1}\rightarrow C_{ijk}}(0)$ from $\mu_{E_{i}^{1}\rightarrow C_{ijk}}(1)$ we can get the update formula for $\alpha_{ijk}^{1}$ as follows:
\begin{eqnarray}
\alpha_{ijk}^{1} &=& \min_{b_{2}b_{3}\ne jk}\left\{ -\beta_{ib_{2}b_{3}}^{1}\right\} \enspace. \label{eq:update_rule_2}
\end{eqnarray}

If $i=\phi$, since there is no constraint function $E_{\phi}^{1}$, both $\beta_{ijk}^{1}$ and $\alpha_{ijk}^{1}$ are equal to 0. The update rules Eq.'s~\eqref{eq:update_rule_1} and~\eqref{eq:update_rule_2} can be easily generalized to $m$-partite matching.
\begin{equation}
\beta_{x_1 x_2 \ldots x_m}^{i}=\sum_{j\ne i}\alpha_{x_1 x_2 \ldots x_m}^{j}+S_{x_1 x_2 \ldots x_m}(1) \label{eq:general_rule_1}
\end{equation}
\begin{eqnarray}
&& \alpha_{x_1 x_2 \ldots x_m}^{i} \label{eq:general_rule_2} \\
&=&\min_{\ldots b_{i-1}b_{i+1}\ldots \ne \ldots x_{i-1}x_{i+1}\ldots}\left\{ -\beta_{\ldots b_{i-1}x_ib_{i+1}\ldots}^{i}\right\} \nonumber \enspace.
\end{eqnarray}

\textbf{Achieving exponentially fewer messages.} By message passing with $\alpha, \beta$ instead of $\mu$, the total number of message updates per iteration is reduced by half down to $2mn^m$, which includes $mn^m$ each of $\alpha, \beta$ messages. This number is still exponential in $m$. However, if we look at Eq.~\eqref{eq:update_rule_2} for $\alpha$ messages, the formula actually can be interpreted as Eq.~\eqref{eq:two_values} below, where $j^*k^*$ are minimizers of Eq.~\eqref{eq:update_rule_2}. This shows that for a fixed entity $i$ in source 1, there are actually only two $\alpha$ values for all different combinations of $j$ and $k$, and only the minimizer combination $j^*k^*$ obtains the different value which is the second minimum (we call the second minimum the \emph{exception value} for $\alpha_{ijk}^{1}$, and the minimum as the \emph{normal value}). In other words, there are actually only $2mn$ of the $\alpha$ messages instead of $mn^m$.
\begin{equation}
\alpha_{ijk}^{1}=\begin{cases}
\alpha_{ij^{*}k^{*}}^{1}, & \mbox{for all $jk\ne j^{*}k^{*}$}\\
\textrm{second\_min}\left\{ -\beta_{ib_{2}b_{3}}^{1}\right\}, & \mbox{for $jk=j^{*}k^{*}$}
\end{cases}\enspace. \label{eq:two_values}
\end{equation}

Using this observation, we replace the $\beta$ messages in Eq.~\eqref{eq:update_rule_2} with $\alpha$ messages using Eq.~\eqref{eq:update_rule_1}, to yield:
\begin{eqnarray}
\alpha_{ijk}^{1} &=& -\max_{b_{2}b_{3}\ne jk}\left\{\alpha_{ib_2b_3}^{2}+\alpha_{ib_2b_3}^{3}+S_{ib_2b_3}(1)\right\} .
\label{eq:update_rule_3}
\end{eqnarray}

We can similarly derive $\alpha$ update functions in the other sources. As shown in the following sections, the final resolution only depends on $\alpha$ messages, therefore we only calculate $\alpha$ messages and keep updating them iteratively until convergence or a maximum number of iterations is reached.

\subsubsection{Stepwise Approximation}

In Eq.~\eqref{eq:update_rule_3} we need to find the optimizer, which results in a complexity at least $O(m^2n^{m-1})$ in the general case to compute an $\alpha$ message, because we need to consider all the combinations of $b_2$ and $b_3$ in the similarity function $S_{ib_2b_3}$. To reduce this complexity, we use a stepwise approach (operating like Gibbs Sampling) to find the optimizer. The basic idea is: suppose we want to compute the optimum combination $j^*k^*$ in Eq.~\eqref{eq:update_rule_3}, we first fix $k_1$ and find the optimum $j_1$ which optimizes the right hand side of Eq.~\eqref{eq:update_rule_3} (in a general case, we fix candidate entities in all sources except one). Also, since $k_1$ is fixed, we don't need to consider the similarity $S_{ik_1}$ between $i$ and $k_1$ in $S_{ib_2k_1}(1)$, as shown in Eq.~\ref{eq:gibbs}.
\begin{eqnarray}
j_1&=&\textrm{argmax}_{b_2}\left\{\alpha_{ib_2k_1}^{2}+\alpha_{ib_2k_1}^{3}+S_{ib_2k_1}(1)\right\} \nonumber\\
&=&\textrm{argmax}_{b_2}\left\{\alpha_{ib_2k_1}^{2}+\alpha_{ib_2k_1}^{3}+S_{ib_2}+S_{b_2k_1}\right\}\label{eq:gibbs}
\end{eqnarray}
From Eq.~\ref{eq:gibbs}, we see that this step requires only $O(mn)$ computation, where $m$ is the number of sources ($m=3$ in the current example), because the computation of similarities among the fixed candidate entities need not matter. After we compute $j_1$, we will fix it and compute the optimizer $k_2$ for the third source. We keep this stepwise update continuing until the optimum value does not change any more. Since the optimum value is always increased, we are sure that this stepwise computation will converge. In the general case, the computation is similar, and the complexity is now reduced to $O(mnT)$ where $T$ is the number of steps.

In practice, we want the starting combination to be good enough so that the number of steps needed to converge is small. So, we always select those entities which have high similarities with entity $i$ as the initial combination. Also, to avoid getting stuck in a local optimum, we begin from several different starting points to find the global optimum. We select the top two most similar entities with entity $i$ from each source, and then choose $L$ combinations among them as the starting points. Therefore, the complexity to find the optimum becomes $O(mnTL)$. The second optimum will be chosen among the outputs of each of the $T$ steps within $L$ starting points, requiring another $O(TL)$ computation.

\subsubsection{Final Selection}
We iteratively update all $\alpha$ messages round by round until only a small portion (\eg less than 1\%) of the $\alpha$'s are changing. In the final selection step, the general max-sum algorithm will assign the value of each variable by the following formula.
\begin{eqnarray}
x_{C} &=& \textrm{argmax}_{x}\sum_{h\in n(C)}\mu_{h\rightarrow C}(x) \nonumber
\end{eqnarray}
The difference between the two configurations of $x_{C_{i_1,\ldots,i_m}}$ is:
\begin{eqnarray}
Q(x_{C})&=&\sum_{h\in n(C)}\mu_{h\rightarrow C}(1)-\sum_{h\in n(C)}\mu_{h\rightarrow C}(0) \nonumber \\
&=&\sum_{j=1}^{m}\alpha_{i_1,\ldots,i_m}^{j} + S_{i_1,\ldots,i_m}(1) \nonumber
\end{eqnarray}
This means the final selection only depends on $\alpha$ messages. Therefore, in our algorithm the configuration of the values is equal to
\begin{equation}
x_{i_1,\ldots,i_m}=\begin{cases}
1, & \mbox{$\sum_{j=1}^{m}\alpha_{i_1,\ldots,i_m}^{j} + S_{i_1,\ldots,i_m}(1) \ge 0$} \\
0, & \mbox{otherwise}
\end{cases}\label{eq:selection}
\end{equation}

While we employ heuristics to improve the efficiency of this step
we omit the details as they are technical and follow the general approach laid out above.

\subsubsection{Complexity Analysis}

Suppose there are $m$ sources, each having $n$ entities. The general max-sum algorithm in Eq.'s~\eqref{eq:max_sum} and~\eqref{eq:var_func_maxsum} needs to compute at least $O(mn^m)$ messages per iteration. In addition to the complexity of Eq.~\eqref{eq:max_sum}, each message also requires at least $O(mn^{m-1})$ complexity. The cost is significant when $n\approx 10^6$ and $m>2$.

The complexity of our algorithm can be decomposed into two parts. During message passing, for each iteration we update $O(nm)$ $\alpha$ messages. For each $\alpha$ message, we first need to find the top-$2$ candidates in $m-1$ sources, which requires $O\left((m-1)n\right)$ time. Then, we find the optimum starting from $L$ combinations within the candidates, which requires $O(mnTL)$ time complexity.
The sorting of the candidate array requires $O\left(mn\log mn\right)$ time.

A favorable property of message-passing algorithms is that the computation of messages can be parallelized. We leave this for future work.

\subsection{Greedy Algorithm}
The greedy approach to bipartite max-weight matching is simple and efficient, so we explore its natural extension to multi-partite graphs.

\textbf{Algorithm.} The algorithm first sorts all the pairs of entities by their scores, discarding those falling below threshold $\theta$. It then steps through the remaining pairs in order. When a pair $x_i,y_j$ is encountered, with both unmatched so far, the pair entities is matched in the resolution: they form a new \emph{clique} $\{x_i, y_j\}$. If either of $x_i,y_i$ are already matched with other entities and formed cliques, we examine every entity in the two cliques (a single entity is regarded as a singleton clique). If all the entities derive from different sources, then the two cliques are merged into a larger clique with the resulting merged clique included in the resolution. Otherwise, if any two entities from the two cliques are from the same two sources, the current pair $x_i$ and $y_j$ will be disregarded as the merged clique would violate the one-to-one constraint.


\textbf{Complexity analysis.} The time complexity of the sorting step is $O(n^2m^2(\log{n}+\log{m}))$, where $m$ is the number of sources and $n$ is the average number of entities in each source. The time complexity of the selection step is $O(n^2m^3)$ as we must examine $O(n^2m^2)$ edges with each examination involving $O(m)$ time to see if adding this edge will violate the one-to-one constraint. In practice the greedy approach is extremely fast.

\textbf{Approximation guarantee.} In addition to being fast and easy to implement, the bipartite greedy algorithm is well-known to be a 2-approximation of exact max-weight matching on bipartite graphs. We generalize this competitive ratio to our new multi-partite greedy algorithm via duality.

\begin{theorem}
\label{thm:np-cr} On any weighted $m$-partite graph, the total weight
of greedy max-weight multi-partite graph matching is at least half
the max-weight matching.
\end{theorem}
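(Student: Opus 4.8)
The plan is to generalize the classical primal--dual proof of the bipartite greedy $2$-approximation through linear programming duality, as the theorem's phrasing (``via duality'') suggests. First I would make the objective explicit: since the weight of a resolution is the sum of the pairwise scores inside its cliques, a matching is equivalent to a choice of edge indicators $x_e \in \{0,1\}$ whose support decomposes into vertex-disjoint cliques, each holding at most one entity per source, with total weight $\sum_e w_e x_e$. I would then relax this integer program by discarding the transitivity/clique-completeness constraints and retaining only, for every non-null entity $v \in D_i$ and every other source $j$, the packing inequality $\sum_{u \in D_j} x_{\{v,u\}} \le 1$ (the null entity $\phi$ is exempt). Every feasible matching satisfies these, so the LP optimum upper-bounds the true optimum $W(\mathrm{OPT})$. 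The decisive structural observation is that each edge $\{v,u\}$ with $v\in D_i,\,u\in D_j$ appears in exactly two of these constraints---the $(v,j)$ and $(u,i)$ ones---precisely as in bipartite matching; this is what forces the constant to remain $2$ independently of the number of sources $m$.

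Next I would pass to the dual, which carries one variable $y_{v,j}\ge 0$ per (entity, other-source) pair and one constraint $y_{v,j}+y_{u,i}\ge w_{\{v,u\}}$ per edge, minimizing $\sum_{v,j} y_{v,j}$. The heart of the argument is to read a dual-feasible assignment off a single run of greedy and bound its value by $2W(G)$, where $W(G)$ is the greedy weight; weak duality then gives $W(\mathrm{OPT}) \le \mathrm{LP} \le \sum_{v,j} y_{v,j} \le 2W(G)$, which is the claim. The natural first attempt sets $y_{v,j}$ to the score of $v$'s clique-mate in source $j$ (zero if none). This already yields the correct objective, since $\sum_{v,j}y_{v,j}=\sum_v\sum_{u\in C(v)} w_{\{v,u\}}=2W(G)$, each internal edge being counted from both endpoints.

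The difficulty---and the step I expect to be the main obstacle---is dual feasibility. Because greedy merges whole cliques, an edge $\{v,u\}$ may be rejected not because $v$ or $u$ is already paired in the relevant source, but because a \emph{third} source is shared between their current cliques, or because a partner was acquired through an intermediary merge whose trigger edge is incident to neither $v$ nor $u$; in those cases the naive partner-weight variables can be far too small to cover $w_{\{v,u\}}$. The tool I would use to repair this is a structural lemma on the greedy run: whenever greedy merges cliques $A$ and $B$ on a trigger edge $e$, \emph{every} cross edge between $A$ and $B$ has weight at most $w_e$---a heavier cross edge would have been processed earlier, and since the eventual merge of $A$ and $B$ is legal it could not have been blocked by any shared source, so it would already have fused the two cliques, a contradiction. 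Armed with this, I would redefine $y_{v,j}$ in terms of the weight at which $v$'s source-$j$ ``slot'' first closes (equivalently, the heaviest blocked edge from $v$ into $D_j$), and argue that any rejected edge is dominated by the slot-closing weight on at least one of its two sides, while the sum still telescopes to at most $2W(G)$. Reconciling feasibility with the $2W(G)$ bound \emph{simultaneously} is the crux, since the partner-weight variables are tight in objective but can be infeasible, whereas trigger-based variables are feasible but overshoot.

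Finally I would clear the routine loose ends: nonnegativity of similarity scores (so that dropping sub-threshold edges and the constraints $y\ge 0$ are both harmless), the exemption of $\phi$ from the one-to-one constraint (its incident matches impose no packing inequality and are handled directly), and a reduction to connected components so the accounting proceeds one greedy clique at a time. As a consistency check, specializing to $m=2$ should collapse the construction to the textbook bipartite primal--dual certificate.
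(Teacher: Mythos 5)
You follow the same primal--dual route as the paper---the identical LP relaxation that keeps only the per-(entity, other-source) degree constraints, the identical dual, and the plan of extracting a dual-feasible certificate from the greedy run---and your diagnosis of the obstruction is exactly right: the partner-weight assignment sums to $2W(G)$ but is infeasible (an edge can be rejected because of a shared \emph{third} source, or because a partner was acquired through a merge whose trigger edge is elsewhere, so the covering inequality fails), while trigger-based assignments are feasible but overshoot $2W(G)$. The gap is that your proposal stops there: the promised reconciliation via ``slot-closing weights'' is never carried out, and in fact it cannot be, because for $m\geq 3$ the statement itself is false, so no choice of dual variables can be simultaneously feasible and of value at most $2W(G)$. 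Concretely, take $D_1=\{x_1,a_1,b_1\}$, $D_2=\{x_2,a_2,b_2\}$, $D_3=\{x_3,a_3,c_3\}$ with nonzero scores $s(x_1,x_2)=1.00$, $s(x_1,a_2)=0.99$, $s(a_1,x_2)=0.98$, $s(x_2,x_3)=0.97$, $s(x_1,a_3)=0.96$, $s(x_2,c_3)=0.95$, $s(b_1,x_3)=0.94$, $s(b_2,x_3)=0.93$. Greedy accepts $(x_1,x_2)$, rejects $(x_1,a_2)$ and $(a_1,x_2)$, merges $x_3$ in via $(x_2,x_3)$, then rejects every remaining edge, finishing with the single clique $\{x_1,x_2,x_3\}$ of weight $1.00+0.97+0=1.97$. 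The matching $\{x_1,a_2,a_3\}$, $\{a_1,x_2,c_3\}$, $\{b_1,b_2,x_3\}$ is feasible and has weight $1.95+1.93+1.87=5.75>2\cdot 1.97$. Since the LP optimum is at least $5.75$, weak duality forces every dual-feasible point to have value at least $5.75$, so no dual-fitting argument of the kind you outline can certify the factor $2$.

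The mechanism is precisely your ``crux'': when greedy merges cliques $A$ and $B$ it closes $2|A||B|$ (entity, other-source) slots, each of which can later block an optimal edge of weight up to the trigger weight, yet greedy may gain only the trigger weight itself, since the other cross pairs can score zero. For $m=2$ every merge has $|A|=|B|=1$, which is why the textbook factor of $2$ survives; for $m\geq 3$ the local ratio is $2|A||B|>2$, and stretching the example to a chain-built clique on $m$ sources drives the ratio to roughly $m$, so no constant factor holds at all. Your structural lemma (cross edges between merged cliques are dominated by the trigger weight) is true, but it bounds the blocked weight \emph{per slot}, not per unit of greedy gain, which is the quantity the dual objective must charge against. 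Ironically, your proposal is more careful than the paper's own proof, which simply asserts that the ``greedily matched weights'' form a dual-feasible solution of value twice the greedy weight---exactly the partner-weight claim your analysis refutes. A correct statement needs either the restriction $m=2$ or an $m$-dependent approximation ratio.
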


\begin{proof}
The primal program for multipartite exact max-weight matching is below.
Note we have pairs of sources indexed by $u,v$ with
corresponding inter-source edge set $\mathcal{E}_{uv}$ connecting nodes of
$D_{u}$ and $D_{v}$.
\begin{eqnarray}
p^{\star}=\max_{\mathbf{\Delta}} &  & \sum_{uv}\sum_{ij\in \mathcal{E}_{uv}}s_{ij}\delta_{ij}\label{eq:np-primal}\\
\st &  & \sum_{j\,:\,(ij)\in \mathcal{E}_{uv}}\delta_{ij}\leq1,\;\forall u,v,\forall i\in D_{u}\nonumber \\
 &  & \mathbf{\Delta}\succeq\mathbf{0}\nonumber \end{eqnarray}
The primary variables select the edges of the matching, the objective function measures the total selected weight, and the constraints enforce the one-to-one property and that selections can only be made or not (and not be negative; and in sum, ensuring the optimal solution rests on valid Boolean values).

The Lagrangian function corresponds to the following, where we introduce dual variables per constraint, so as to bring constraints into the objective thereby forming an unconstrained optimization:
\begin{eqnarray*}
&& \mathcal{L}(\mathbf{\Delta},\mathbf{\lambda},\mathbf{\nu}) \\
& = & \sum_{uv}\sum_{ij\in \mathcal{E}_{uv}}s_{ij}\delta_{ij} \\
&& +\; \sum_{uv}\sum_{i\in D_{u}}\lambda_{uvi}\left(1-\sum_{j\,:\,(ij)\in \mathcal{E}_{uv}}\delta_{ij}\right) \\
&& +\; \sum_{uv}\sum_{ij\in \mathcal{E}_{uv}}\nu_{ij}\delta_{ij}\;.
\end{eqnarray*}
Inequality constraints require non-negative duals so that constraint violations are penalized appropriately.

The Lagrangian dual function then corresponds to maximizing the Lagrangian over the primal variables:
\begin{eqnarray*}
&& g(\mathbf{\lambda},\mathbf{\nu}) \\
&=& \sup_{\mathbf{\Delta}}\mathcal{L}(\mathbf{\Delta},\mathbf{\lambda},\mathbf{\nu})\\
 & = & \sum_{uv}\sum_{i\in D_{u}}\mathbf{\lambda}_{uvi}+\sup_{\mathbf{\Delta}}\sum_{uv}\sum_{ij\in \mathcal{E}_{uv}}\left(u_{ij}+\nu_{ij}-\lambda_{uvi}-\lambda_{uvj}\right)\delta_{ij}\\
 & = & \begin{cases}
\mathbf{\lambda}\cdot\mathbf{1}\;, & \mbox{if }s_{ij}+\nu_{ij}-\lambda_{i}-\lambda_{j}\leq0,\, \forall uv;\; i,j\in \mathcal{E}_{uv}\\
\infty\;, & \mbox{otherwise}\end{cases}
\end{eqnarray*}

Next we form the dual LP, which minimizes the dual function subject to the non-negative dual variable constraints. This is the dual LP to our original primal. Dropping the superfluous dual variables:
\begin{eqnarray}
\min_{\mathbf{\lambda}} &  & \sum_{uv}\sum_{i\in D_u}\lambda_{uvi}\label{eq:bp-dual-1}\\
\st &  & s_{ij}\leq\lambda_{uvi}+\lambda_{uvj}\;\forall uv\forall ij\in \mathcal{E}_{uv}\nonumber \\
 &  & \mathbf{\lambda}\succeq\mathbf{0}\nonumber \end{eqnarray}

Finally we may note that the greedily matched weights form a feasible
solution for the dual LP. Moreover
the value of the dual LP under this feasible solution is twice
the greedy total weight. By weak duality this feasible solution bounds
the optimal primal value proving the result.
\end{proof}

The same argument as is used in the bipartite case demonstrates that this
bound is in fact sharp: there exist pathological weighted graphs for
which greedy only achieves half the max weight. In real-world matching
graphs are sparse and their weights are relatively well-behaved, and so
in practice greedy usually achieves much better than this worst-case
approximation as demonstrated in our experiments.

\section{Experimental Methodology}\label{sec:experimental}

We present our experimental results on both real-world and synthetic data. Our movie dataset aims to demonstrate the performance of our algorithms in a real, very large-scale application; the publication dataset adds further support to our conclusions on additional real-world data; while the synthetic data is used to stress test multi-partite matching in more difficult settings.

\subsection{Movie Datasets}
For our main real-world data, we obtained movie meta-data from six popular online \singleDoubleBlind{movie sources used in the Bing movie vertical (\cf Figure~\ref{fig:bing})}{movie sources} via a combination of site crawling and API access. All sources have a large number of entities: as shown in Table~\ref{table:movie_data}, the smallest has over 12,000, while the largest has over half a million. As noted in Section~\ref{sec:intro}, while we only present results on one real-world dataset, the data is three orders of magnitude larger than typical for real-world benchmarks in the literature and is from a real production system~\cite{KopckeTR10}.

\begin{table}[t]
\centering
\caption{Experimental movies data}
\label{table:movie_data}
\begin{tabular}{|c|c|c|c|c|c|}
\hline
Source & Entities & Source & Entities & Source & Entities\\
\hline \hline
AMG & 305,743 & Flixster & 140,881 & IMDB & 526,570 \\
\hline
iTunes & 12,571 & MSN & 104,385 & Netflix & 75,521 \\
\hline
\end{tabular}
\end{table}

Every movie obtained had a unique ID within each source and a title attribute. However, other attributes were not universal. Some sources tended to have more metadata than others. For example, IMDB generally has more metadata than Netflix, with more alternative titles, alternative runtimes, and a more comprehensive cast list. No smaller source was a strict subset of a larger one. For example, there are thousands of movies on Netflix that cannot be found on AMG, even though AMG is much larger. We base similarity measurement on the following feature scores:
\begin{itemize}\itemsep0.005in
\item Title: Exact match yields a perfect score. Otherwise, the fraction of normalized words that are in common (slightly discounted). The best score among all available titles is used.
\item Release year: the absolute difference in release years up to a maximum of 30.
\item Runtime: the absolute difference in runtime, up to a maximum of 60 minutes.
\item Cast: a count of the number of matching cast members up to a maximum of five. Matching only on part of a name yields a fractional score.
\item Directors: names are compared like cast. However, the number matching is divided by the length of the shorter director list.
\end{itemize}
Although the feature-level scores could be improved---\eg by weighing title words by TF-IDF, by performing inexact words matches, by understanding that the omission of ``part 4" may not matter while the omission of ``bonus material" is significant---our focus is on entity matching using given scores, and these functions are adequate for obtaining high accuracy, as will be seen later.

After scoring features, we used regularized logistic regression to learn weights to combine feature scores into a single score for each entity pair. In order to train the logistic regression model and also evaluate our entity matching algorithms, we gathered human-labeled truth sets of truly matching movies. For each pair of movie sources, we randomly sampled hundreds of movies from one source and then asked human judges to find matching movies from the other source. If there exists a matching, then the two movies will be labeled as a matching pair in the truth set. Otherwise, the movie will be labeled as non-matching from the other source, and any algorithm that assigns a matching to that movie from the other source will incur a false positive. We also employ the standard techniques of blocking~\cite{Kopcke2010} in order to avoid scoring all possible pairs of movies. In the movie datasets, two movies belong to the same block if they share at least one normalized non-stopword in their titles.

\subsection{Publications Datasets}
\label{sec:pubdata}

\begin{table}[t]
\centering
\caption{Experimental publications data}
\label{table:pubs_data}
\begin{tabular}{|c||c|c|c|}
\hline
Source & DBLP & ACM & Scholar \\
\hline
Entities & 2,615 & 2,290 & 60,292 \\
\hline
\end{tabular}
\end{table}

We follow the same experimental protocol on a second real-world publications dataset. The data collects
publication records from three sources as detailed in Table~\ref{table:pubs_data}. Each record has the title,
authors, venue and year for one publication.

\subsection{Performance Metrics on Real-World Data}
We evaluate the performance of both the Greedy and the Message Passing algorithms via precision and recall. For any pair of sources, suppose $\hat{R}$ is the output matching from our algorithms. Let $R_{+}$ and $R_{-}$ denote the matchings and non-matchings in our truth data set for these two sources, respectively. Then, we calculate the following standard statistics:
\begin{eqnarray*}
TP &=& |\hat{R} \cap R_{+}|\\
FN &=& |R_{+} \backslash \hat{R}| \\
FP &=& |\hat{R} \cap R_{-}| \\
&&\ +\ \left|\left\{\left.(x,y)\in \hat{R} \right| \exists z\neq y, (x,z)\in R_{+}\right\}\right| \\
&&\ +\ \left|\left\{\left.(x,y)\in \hat{R} \right| \exists z\neq x, (z,y)\in R_{+}\right\}\right|\enspace.
\end{eqnarray*}

The precision is then calculated as $\frac{TP}{TP+FP}$, and recall is calculated as $\frac{TP}{TP+FN}$. By varying threshold $\theta$
we may produce a sequence of precision-recall pairs producing a PR curve.

\subsection{Synthetic Data Generation}
To thoroughly compare our two approaches to general MPEM problems, we design and carry out several experiments on synthetic data with different difficulty levels. This data is carefully produced by Algorithm~\ref{alg:synthetic_gen} to simulate the general entity matching problem.

\begin{algorithm}[t!]
\caption{Synthetic Data Generation \label{alg:synthetic_gen}}
\begin{algorithmic}[1]
\REQUIRE Generate $n$ entities in $m$ sources; $k$ features per entity
\STATE \textbf{for} $i=1$ to $n$ \textbf{do}
\STATE \hspace{1em} for the current entity, generate $k$ values \\
	\hspace{1em} $F^1, F^2, \ldots, F^k\sim Unif[0,1]$ as true features;
\STATE \hspace{1em} \textbf{for} $j=1$ to $m$ \textbf{do}
\STATE \hspace{2em} for each source, generate feature values for \\ \hspace{2em} the current entity as follows:
\STATE \hspace{2em} \textbf{for} $t=1$ to $k$ \textbf{do}
\STATE \hspace{3em} each feature value $f^t$ in $f_{j:i}$ sampled \\ \hspace{3em} from $\mathcal{N}(F^t, \sigma^2)$
\STATE \textbf{done}
\STATE Compute scores between entities across sources:
\STATE \textbf{for} $i_1=1$ to $m-1$ \textbf{do}
\STATE \hspace{1em} \textbf{for} $i_2=i_1+1$ to $m$ \textbf{do}
\STATE \hspace{2em} \textbf{for} $j_1=1$ to $n$ \textbf{do}
\STATE \hspace{3em} \textbf{for} $j_2=1$ to $n$ \textbf{do}
\STATE \hspace{4em} score entities $j_1, j_2$ from sources $i_1, i_2$ \\
	\hspace{4em} using values $f_{i_1:j_1}, f_{i_2:j_2}$
\STATE \textbf{done}
\end{algorithmic}
\end{algorithm}

We randomly generate features of entities instead of randomly generating scores of entity pairs. This corresponds to the real world, where each source may add noise to the true meta-data about an entity. In the synthetic case, we start with given true data and add random noise. An important variable in the algorithm is the variance $\sigma^2$. As $\sigma$ increases, even the same entity may have very different feature values in different sources, so the entity matching problem gets harder. In our experiment, we vary the value of $\sigma$ through 0.02, 0.04, 0.06, 0.08, 0.1 and 0.2 to create data sets with different difficulty levels.

We set the number of entities $n=100$, the number of sources $m=3$, and the number of features $k=5$. We use the normalized inner product of two entities as their similarity score.

\subsection{Performance Metrics on Synthetic Data}
Since all the entities and sources are symmetric, we evaluate the performance of different approaches on the whole data set. For a data set with $m$ sources and $n$ entities, there are a total of $\frac{nm(m-1)}{2}$ true matchings. If the algorithm outputs $T$ entity matchings out of which $C$ entity pairs are correct, then the precision and recall are computed as:
\[Precision = \frac{C}{T},\,\,\,Recall = \frac{2C}{nm(m-1)} \nonumber\]
As another measure we also use F-1, the harmonic mean of precision and recall, defined as $\frac{2*precision*recall}{precision+recall}$.

\subsection{Unconstrained ManyMany Baseline}

As a baseline approach, we employ a generic unconstrained ER approach---denoted \AlgMatchMM---that is common to many previous works~\cite{KDD03-MARLIN,SIGMOD09-logistic,QDBMUD08-STEM,Kopcke2010,KDD98-logistic,ActiveLearn-KDD02,JCoopInfoSys,ActiveAtlas-KDD02,InfoSys-MultClassSys}. Given a score function $f: D_i \times D_j \to \mathbb{R}$ on sources $i,j$ and tunable threshold $\theta\in\mathbb{R}$,  \AlgMatchMM\ simply resolves all tuples with score exceeding the threshold. Compared to our message passing and greedy algorithms described in Section~\ref{sec:algorithms}, \AlgMatchMM\ allows an entity in one data source to be matched with multiple entities in another data source. Because of this property, it is also important to notice that adding more data sources will not help improve the resolution results of \AlgMatchMM\ on any pair of sources, \ie the resolution results on a pair of two sources using \AlgMatchMM\ are invariant to any other sources being matched.

\section{Results}\label{sec:results}
We now present results of our experiments.

\subsection{Results on Movie Data}
The real-world movie dataset contains millions of movie entities, on which both message-passing and greedy approaches are sufficiently scalable to operate. This section's experiments are designed primarily for answering the following:
\begin{enumerate}\itemsep0.005in
\item Whether multi-partite matching can get better precision and recall compared with simple bipartite matching---do additional sources improve statistical performance;
\item Whether the one-to-one global constraint improves the resolution results in real movie data;
\item Whether the message-passing-based approach achieves higher total weight than the heuristic greedy approach; and
\item How the two approaches perform on multi-partite matching of real-world movie data.
\end{enumerate}

\subsubsection{Multipartite vs. Bipartite}\label{sec:pr_curve}
We examine the effectiveness of multi-partite matching by adding sources to the basic bipartite matching problem. Specifically, we use two movie sources (msnmovie and flixster in our experiments) as the target source pair, and perform bipartite matching on these two sources to obtain baseline performance. We then add another movie source (IMDB) and perform multi-partite matching on these three sources, and compare the matching results on the target two sources against the baseline. Finally, we perform multi-partite matching on all six movie sources and record the results on the target two sources.

\begin{figure}[t]
\begin{center}
\begin{minipage}[t]{1\columnwidth}
\centering
\includegraphics[width=0.9\linewidth]{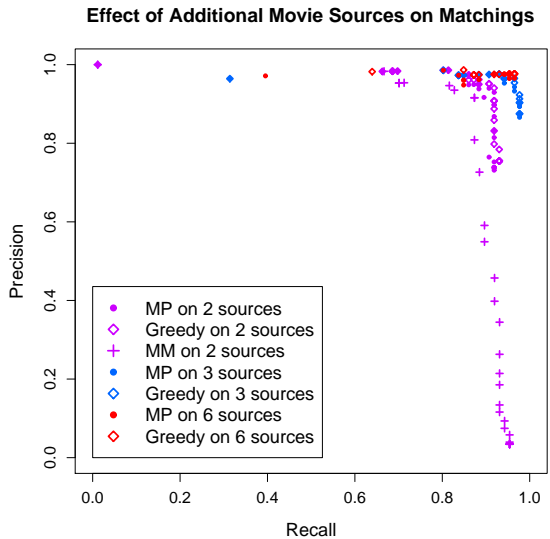}
\caption{Performance of resolution on increasingly many movie
sources. The algorithms enjoy comparably high performance which improves
with additional sources.}
\label{fig:movies-pr}
\end{minipage}
\end{center}
\end{figure}

Seven groups of results are shown together in Figure~\ref{fig:movies-pr}. We use diamonds to plot the results from the greedy approach, dots to plot the results from the message-passing approach, and plus symbols to plot the results from \AlgMatchMM. The results for different numbers of sources are recorded with different colors. The PR curve is generated by using different thresholds $\theta$ on scores. Specifically, given a threshold, we regard all the entity pairs which have similarity below the threshold as having 0 similarity and as never being matched. Then, we perform entity resolution based on the new similarity matrix and plot the precision and recall value as a point in the PR curve. We range from threshold 0.51 through 0.96 incrementing by 0.03.

From the results, we can first see that when comparing the three approaches on two sources, both message passing and greedy are much better than \AlgMatchMM. This demonstrates the effectiveness of adding the global one-to-one constraint in entity resolution in real movie data. It's also important to notice that without the one-to-one constraint, \AlgMatchMM\ is not affected at all when new sources are added; while for the approaches with the global one-to-one constraint, the resolution improves significantly with increasing number of sources, particularly going from 2 to 3. This further shows the importance of having the one-to-one constraint in multi-partite entity resolution: it facilities appropriate linking of the individual pairwise matching problems.

In addition, for both message-passing and greedy approaches, the PR curve of using three sources for matching is higher than using only two sources for matching. This means that with an additional source IMDB, which has the largest number of entities among all the sources, the resolution of msnmovie and flixster is much improved. The matching results on all of the six sources are also higher than using only two sources, but only slightly higher than on three sources. This implies that the other three movie sources do not provide much more information beyond IMDB, likely because IMDB is a very broad source for movies and it's already very informative with good quality metadata to help match msnmovie and netflix.

For any number of sources, the precision and recall of greedy is very similar to that of message passing and very close to 1.0. This may be due to our feature scores and similarity functions being already pretty good for movie data. In order to determine whether they are fundamentally similar, or whether greedy was benefiting from this data
set having low feature noise, we designed the synthetic experiment which is described in the following sections. Since the experiments on real movie data have already shown the advantage of these two approaches over \AlgMatchMM, we do not compare \AlgMatchMM\ in the synthetic experiment.


\begin{figure*}[t]
\begin{center}
\begin{minipage}[t]{0.32\textwidth}
\centering
\includegraphics[width=1\linewidth]{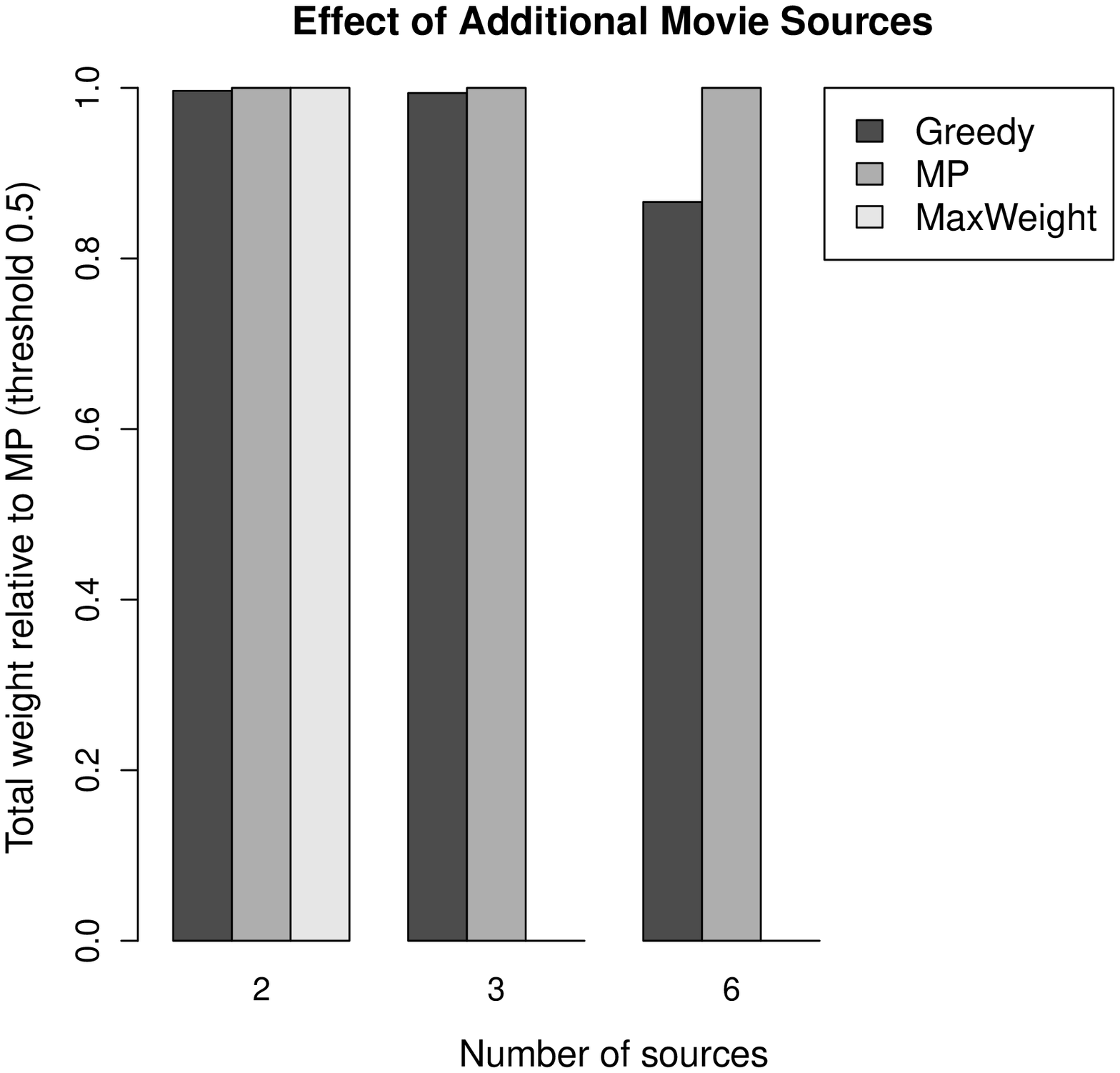}
\caption{Comparing weights, with true max weight for 2 sources. To
compare across varying number of sources, we plot weights
relative to Message Passing.}
\label{fig:movies-weight}
\end{minipage}\hfill
\begin{minipage}[t]{0.32\textwidth}
\centering
\includegraphics[width=1\linewidth]{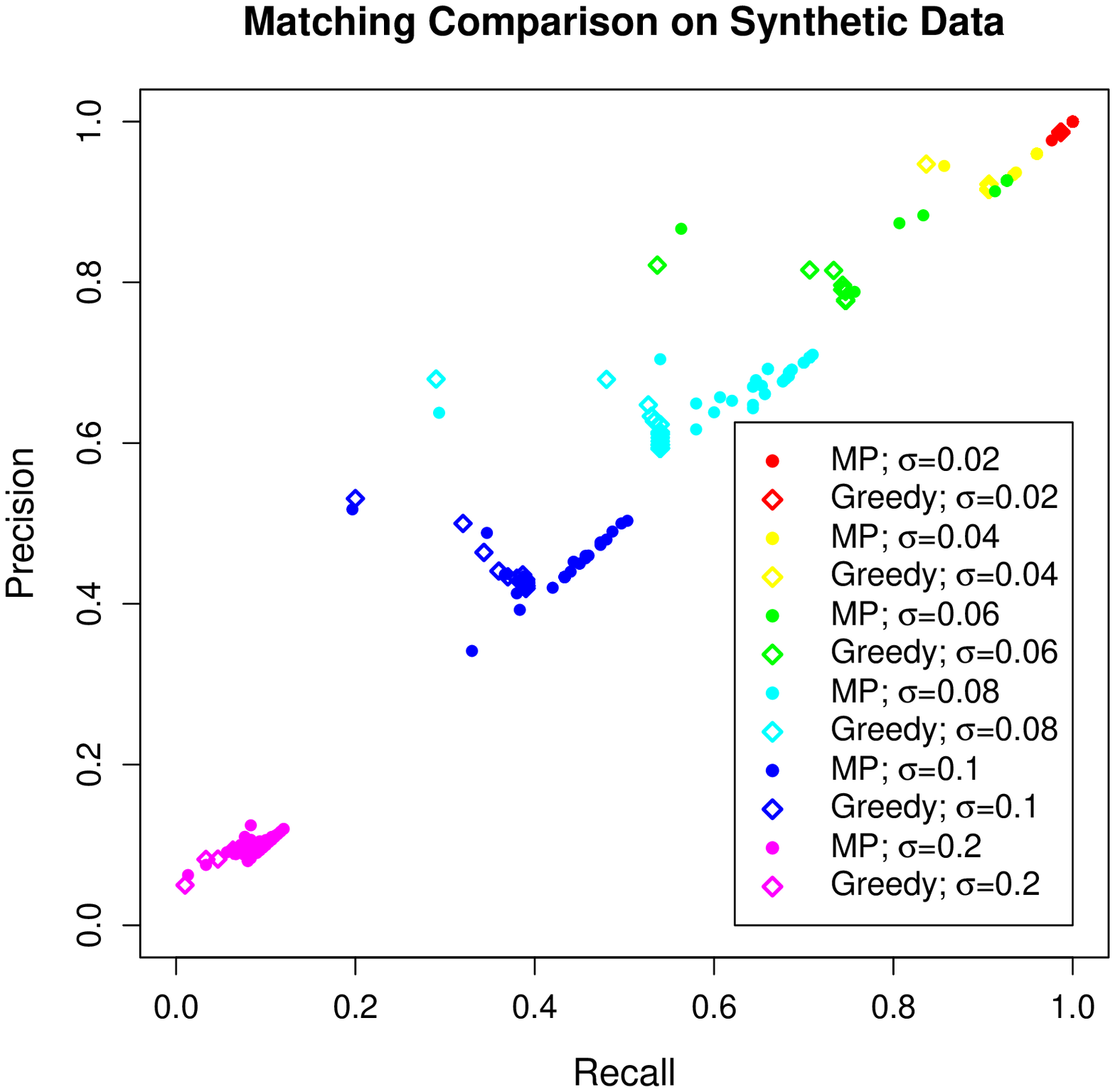}
\caption{Precision-Recall comparison of Message Passing and Greedy matching
on synthetic data with varying amounts of score noise.}
\label{fig:synthetic-pr}
\end{minipage}\hfill
\begin{minipage}[t]{0.32\textwidth}
\centering
\includegraphics[width=1\linewidth]{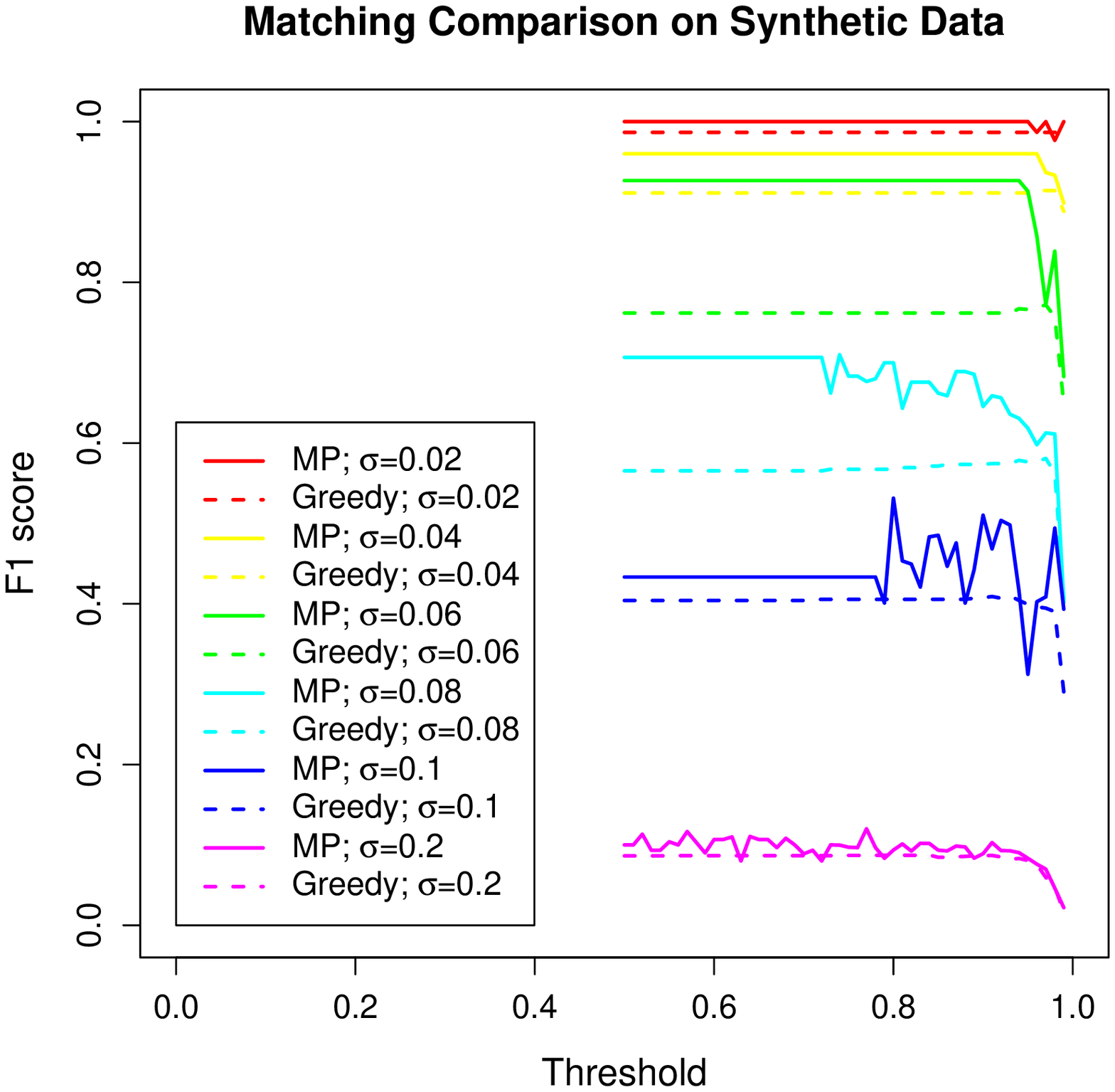}
\caption{F1 scores of Message Passing and Greedy matching
on synthetic data with varying amounts of score noise.}
\label{fig:synthetic-f1}
\end{minipage}
\end{center}
\end{figure*}

\subsubsection{Total weight}
Next we compare the total weight of the matchings output by the message-passing and greedy approaches. The total weight of a matching is calculated as the sum of similarity scores of each matching pair. For example, if a movie $x$ from IMDB matches with movie $y$ in msnmovie and movie $z$ in netflix, then the total weight is $sim(x,y)+sim(x,z)+sim(y,z)$. Since the message-passing approach goes to additional effort to better-approximate the maximum weight matching, we expect that its total weight should be higher than greedy.


In Figure~\ref{fig:movies-weight}, we show the comparative results when performing multi-partite matching on two, three, and six sources, where the threshold is set as 0.51. For two sources, since exact maximum-weight bipartite matching algorithms is tractable, we compute the true maximum weight and display as ``MaxWeight''. For comparison, we use the weight of the message-passing approach as the reference, and plot the relative value of the greedy and max-weight approaches to this reference. From the figure, we can see that on all the datasets, the message-passing approach gets higher total weight than the greedy approach. Also, as the number of sources increases, the difference between the total weight of the message and greedy approaches gets larger. For example, on six sources the total weight of the message-passing approach is more than 10\% higher than greedy. In addition, in the two source comparison we can see that the message-passing approach (total weight: 76404) gets almost the same total weight as the maximum-weight matching (total weight: 76409), while the greedy approach's total weight is a little lower (76141). This suggests that our message-passing approach approximates the maximum-weight matching well. On the other hand, the result also implies that the movie matching data is far from pathological since the total weight for greedy is very close to the maximum-weight matching on two sources, which is in stark contrast with the 2-approximation worst case.


%

\subsection{Results on Publication Data}

We further explore our approaches with the publication dataset of Section~\ref{sec:pubdata}, for which
results are shown in Figure~\ref{fig:dblp}. We experiment on matching DBLP vs. Scholar, with matching DBLP vs. ACM vs. Scholar, comparing the results of the bipartite and tripartite matchings with truth labels on DBLP-Scholar. The results lead similar conclusions drawn from the movie data: (1) globally-constrained matching with an additional source improve the accuracy of matching DBLP-Scholar alone; and (2) one-to-one constrained approaches perform better than the unconstrained \AlgMatchMM approach.

\begin{figure}[t]
\begin{center}
\begin{minipage}[t]{1\columnwidth}
\centering
\includegraphics[width=0.85\linewidth]{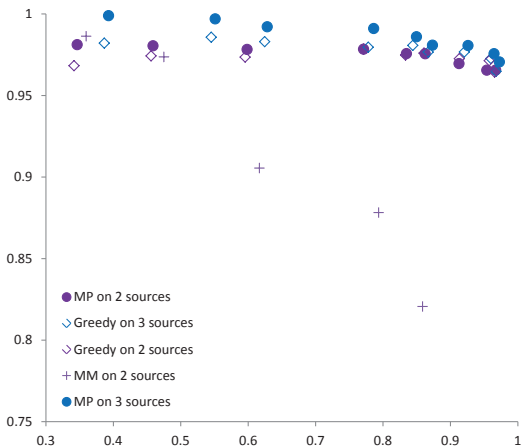}
\caption{Performance of resolution on increasingly many publication
sources.}
\label{fig:dblp}
\end{minipage}
\end{center}
\end{figure}

\subsection{Results on Synthetic Data}

As discussed in Section~\ref{sec:experimental}, we created synthetic datasets with different levels of feature noise by varying parameter $\sigma$. Our primary aim with the synthetic data is to see if greedy is truly competitive to message passing, or if it only achieves similar precision/recall performance in the case of low feature noise on the relatively well-curated movie data.
Our experiments are on 3 sources with 100 entities. We use precision, recall and F-1 score as the evaluation metrics to compare the two approaches.

%
%


Figure~\ref{fig:synthetic-pr} shows the PR curves of the two approaches on the different data sets. Here, $\sigma=0.02$ represents the easiest dataset (\ie generated with minimal noise) and $\sigma=0.2$ represents the most noisy dataset (\ie generated with significant noise). We use a different color to present the results on different data quality and use different point markers to present results from the different approaches. From the figure, we see that for the easiest dataset ($\sigma=0.02$), both greedy and message passing work exceptionally well, reminiscent of the movie matching problem. When the dataset gets more noisy, both approaches experience degraded results, but the decrease of the message-passing approach is far less than the greedy approach. As shown in the figure, for datasets with $\sigma=0.04, 0.06, 0.08, 0.1$, message-passing operates far better than greedy. Finally, when the data arrives at a very noisy level ($\sigma=0.2$), both approaches perform equally poorly.

In Figure~\ref{fig:synthetic-f1}, we show the F-1 values of the two approaches along with different thresholds. We can see very clearly the contrast between the two approaches when the data becomes increasingly noisy. Here colors again denote varying noise level, while the line type denotes message passing (solid) or greedy (dashed).
For example, the gap between the solid and dashed lines is very small at the top of the figure when the data is relatively clean. However, as $\sigma$ increases the gap too increases, and reaches an apex on the green curves ($\sigma=0.06$). Later, as $\sigma$ increases even more, the gap becomes smaller but still exists. At last, when $\sigma=0.2$, the gap becomes very small, which means the two approaches perform almost the same under severe noise. This is to be expected: no method can perform well under extreme noise.

In sum, from the experimental results on the synthetic data, we can conclude that for the datasets with the least feature noise, both the greedy approach and the message-passing approach perform very well, while for the most noisy datasets, both of these two approaches perform poorly. On the other hand, when the feature noise is between these two extremes, message-passing is much more robust than greedy.

\subsection{Convergence and Scalability}
The complexity of our message-passing approach depends on the number of iterations that the algorithm needs to converge. In this section, we examine the convergence of the algorithm and empirically compare the efficiency of the message-passing approach with the greedy approach.

In the convergence experiment, we use 6 data sources each having 1k entities making 6k $\alpha$ messages total. We change the threshold (as in Sec.~\ref{sec:pr_curve}) from 0.3 to 0.8 with increment 0.1 to filter the candidate matchings. A higher value of the threshold will have a lower number of edges left in the weighted graph, so the total weight of the final matching will also be lower and the iterations needed to converge is also smaller. In Figure~\ref{fig:convergence_test}(a), we show how many iterations the message-passing approach needs to converge with different thresholds, and Figure~\ref{fig:convergence_test}(b) shows the total weight after each iteration of message passing. Both of these two graphs show that with 6 sources and 1,000 entities, the message-passing approach converges quickly. With all the different thresholds, the approach converges within 50 iterations. This indicates that the factor $T$ in the complexity analysis for the message-passing approach is much smaller than the total number of entities $n$.

\begin{figure}[t]
\centering
\begin{tabular}{cc}
  \includegraphics[width=4cm]{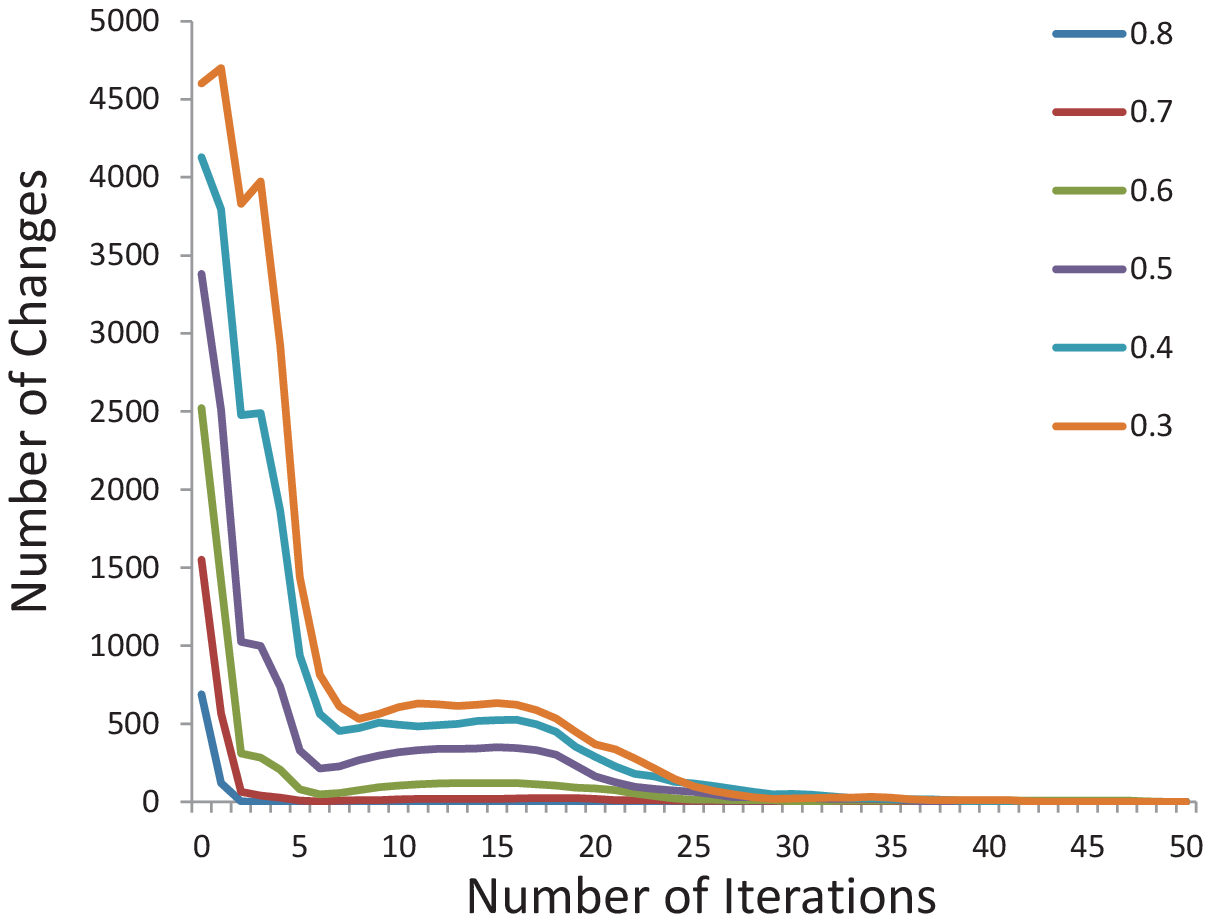} &
  \includegraphics[width=4cm]{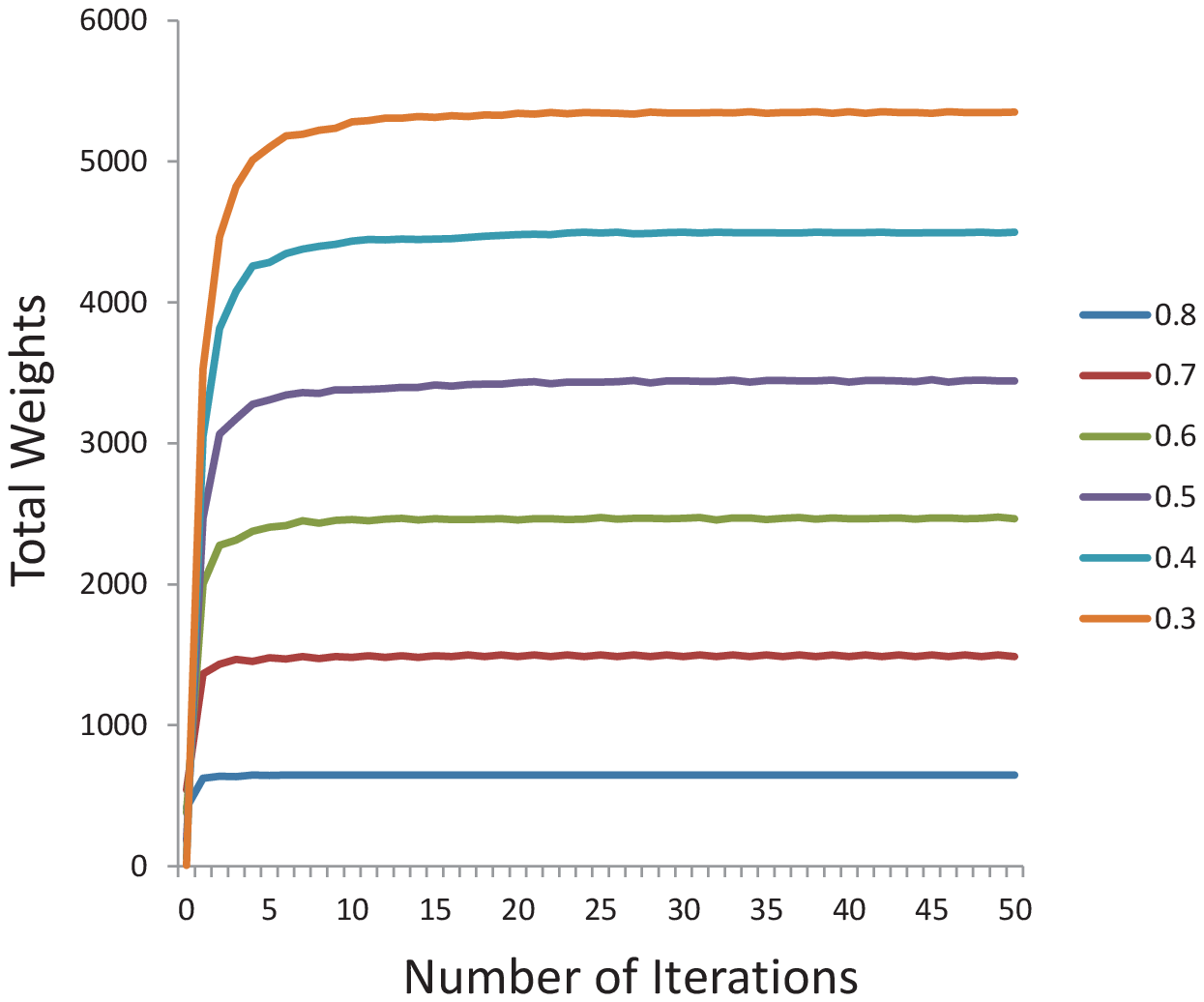} \\
  (a) Change of $\alpha$ messages & (b) Change of total weight
\end{tabular}
\caption{Convergence Test}
\label{fig:convergence_test}
\end{figure}

We also conducted an experiment to empirically compare the efficiency of the message-passing approach and the greedy approach. There is no doubt that the message-passing approach is much slower than the greedy approach, but the purpose of the experiment is to see how well the message-passing approach scales when the number of entities and the number of sources increase. The experiment was performed on a computer with Intel Core i5 CPU M560@2.67GHz and 4GB Memory. Figure~\ref{fig:running_comparison}(a) shows the time cost comparison between the two approaches when we fix the number of sources as 6 and increase the number of entities per source from 200 to 1,000, and in Figure~\ref{fig:running_comparison}(b) we fix the number of entities per source as 600 and increase the number of sources 3 to 7. We can see that the greedy approach scales well. While the time cost of the message-passing approach increases much faster, the time complexity of the algorithm is still acceptable for solving real problems. In practice, it takes 1 to 2 hours to use message passing to find the true matching on the real, large-scale movie dataset, which has 6 sources, some of which containing around half a million movie entities.

\begin{figure}[t]
\centering
\begin{tabular}{cc}
  \includegraphics[width=4cm]{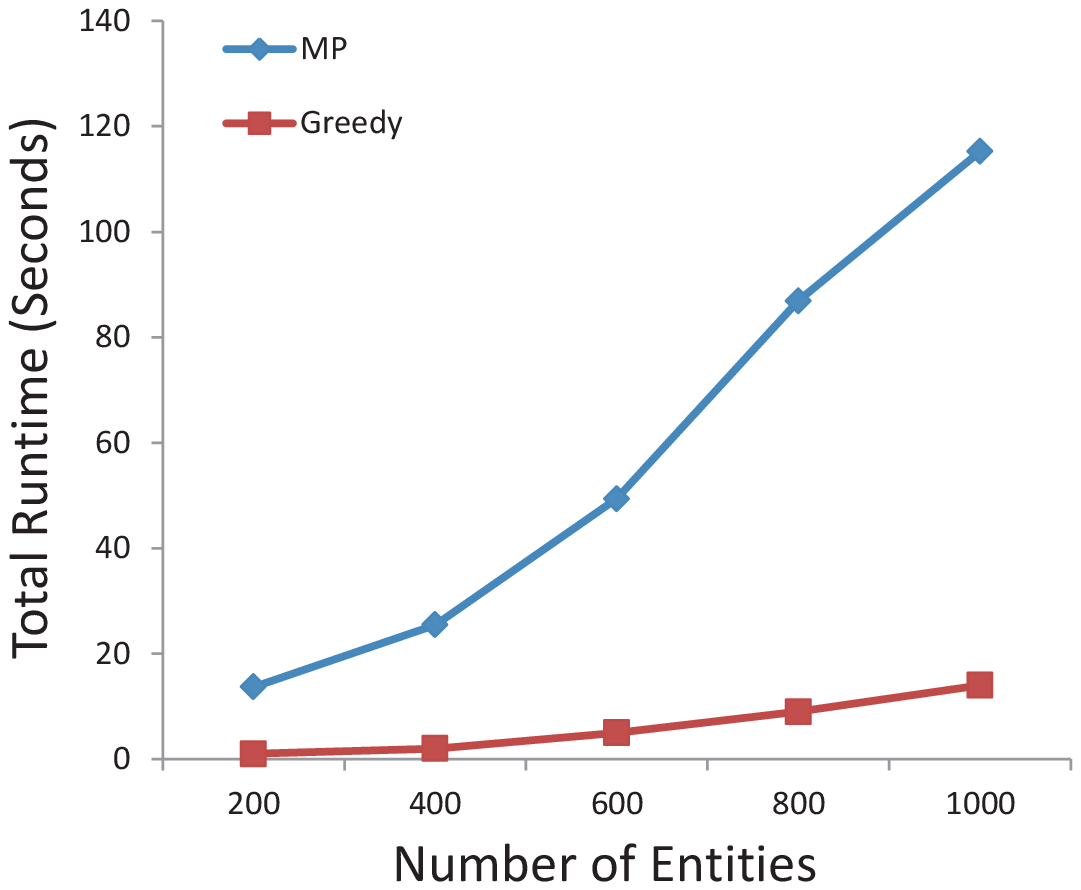} &
  \includegraphics[width=4cm]{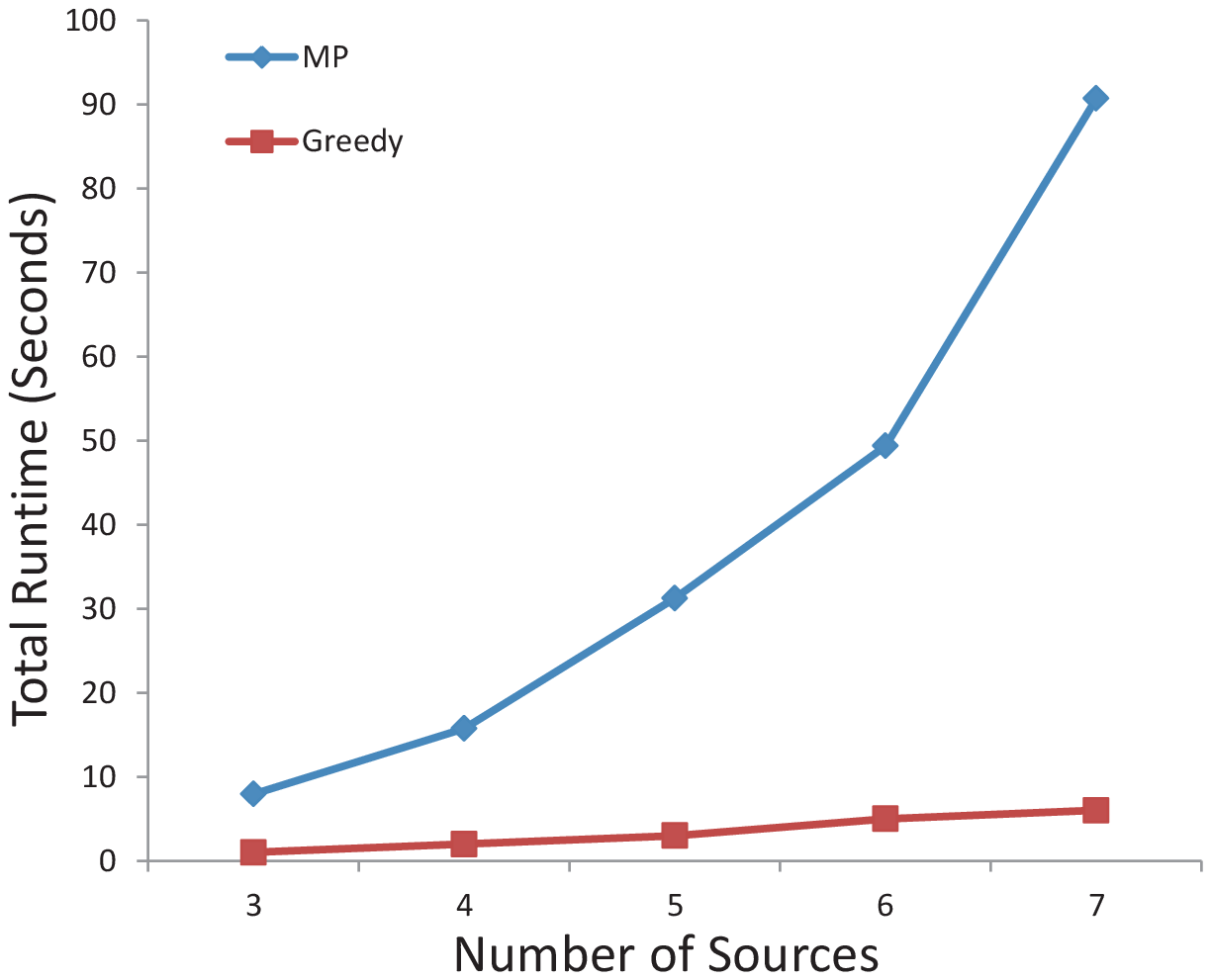} \\
  (a) Increase Entities & (b) Increase Sources
\end{tabular}
\caption{Running Time Comparison}
\label{fig:running_comparison}
\end{figure}

\section{Conclusions}\label{sec:conc}
In this paper, we have studied the multi-partite matching problem for integration across multiple data sources. We have proposed a sophisticated factor-graph message-passing algorithm and a greedy approach for solving the problem in the presence of one-to-one constraints, motivated by real-world socio-economic properties that drive data sources to be naturally deduplicated. We provided a competitive ratio analysis of the latter approach, and conducted comparisons of the message-passing and greedy approaches on a very large real-world Bing movie dataset, a smaller publications dataset, and synthetic data. Our experimental results prove that with additional sources, the precision and recall of entity resolution improve; that leveraging the global constraint improves resolution; and that message-passing, while slower to run, is much more robust to noisy data than the greedy approach.

For future work, implementing a parallelized version of the message-passing approach is an interesting direction to follow. Another open area is the formation of theoretical connections between the surrogate objective of weight maximization and the end-goal of high precision and recall.

\IEEEdisplaynotcompsoctitleabstractindextext

%
\IEEEpeerreviewmaketitle

\bibliographystyle{abbrv}
\bibliography{sources}

\begin{IEEEbiography}[{\includegraphics[width=1in,height=1.25in,clip,keepaspectratio]{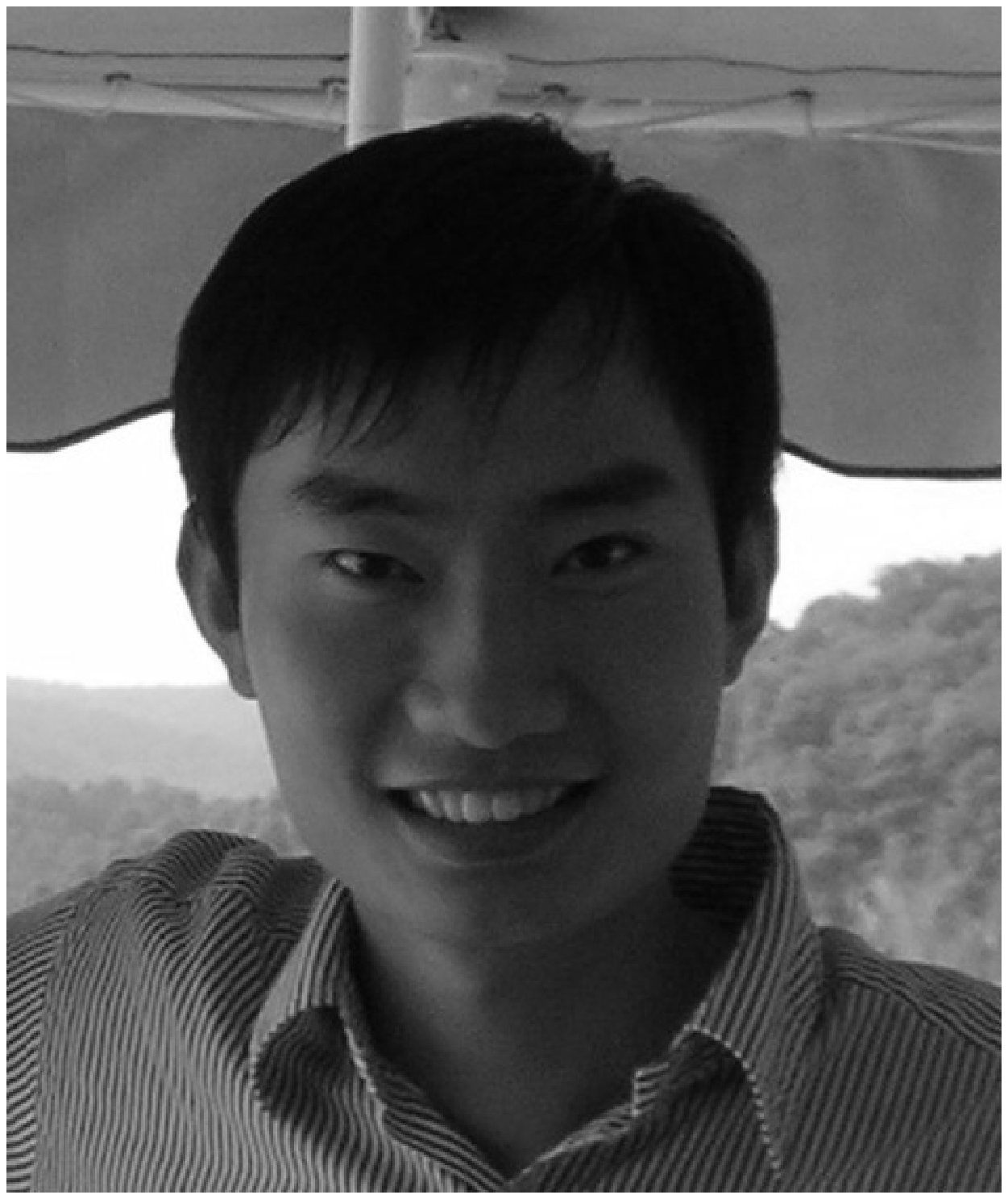}}]{Duo Zhang}
is a software engineer on the Ads team at Twitter. Before joining Twitter, he received his Ph.D. from the University of Illinois at Urbana-Champaign. He was a research intern at Microsoft, IBM, and Facebook during his Ph.D program. Dr. Zhang has published numerous research papers in text mining, information retrieval, databases, and social networking. He has also served on PCs and reviewers at major computer science conferences and journals including SIGKDD, ACL, and TIST.
\end{IEEEbiography}

\begin{IEEEbiography}[{\includegraphics[width=1in,height=1.25in,clip,keepaspectratio]{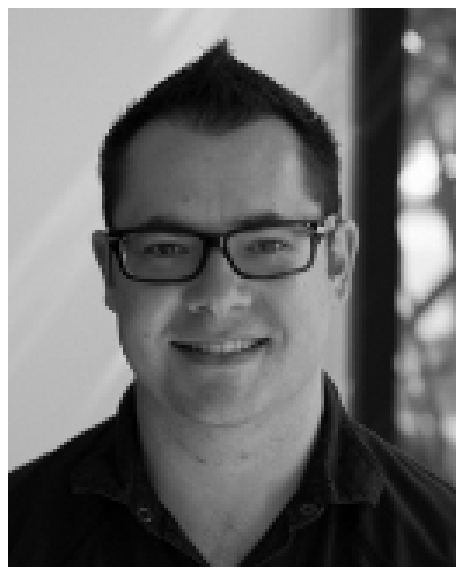}}]{Benjamin I. P. Rubinstein}
is Senior Lecturer in CIS at the University of Melbourne, Australia, and holds a PhD from UC Berkeley. He actively researches in statistical machine learning, databases, security \& privacy. Rubinstein has served on PCs and organised workshops at major conferences in these areas including ICML, SIGMOD, CCS. Previously he has worked in the research divisions of Microsoft, Google, Yahoo!, Intel (all in the US), and at IBM Research Australia. Most notably as a Researcher at MSR Silicon Valley Ben helped ship production systems for entity resolution in Bing and the Xbox360.
\end{IEEEbiography}


\begin{IEEEbiography}[{\includegraphics[width=1in,height=1.25in,clip,keepaspectratio]{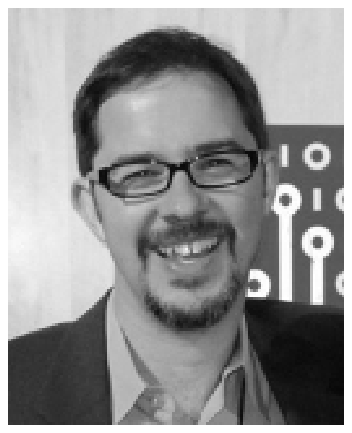}}]{Jim Gemmell}
is CTO of startup Tr\={o}v and holds PhD and M.Math degrees.
Dr. Gemmell is a world leader in the field of life-logging, and is author of the popular book \emph{Your Life, Uploaded}.
He has numerous publications in a wide range of areas including life-logging, multimedia, networking, video-conferencing, and databases.
Dr. Gemmell was previously Senior Researcher at Microsoft Research where he made leading contributions to major products
including Bing, Xbox360, and MyLifeBits.
\end{IEEEbiography}




\vfill

\end{document}